\newcommand{\pushright}[1]{\ifmeasuring@#1\else\omit\hfill$\displaystyle#1$\fi\ignorespaces}
\newcommand{\pushleft}[1]{\ifmeasuring@#1\else\omit$\displaystyle#1$\hfill\fi\ignorespaces}
\newcolumntype{?}{!{\vrule width 1.5pt}}
\definecolor{ginger}{rgb}{0.69, 0.4, 0.0}
\newtheorem{property}{Property}
\newtheorem{theorem}{Theorem}
\newtheorem{lemma}{Lemma}
\newenvironment{manualtheorem}[1]{%
  \manualtheoreminner
}{\endmanualtheoreminner}
\DeclareMathOperator{\wt}{wt}
\newcommand{\abs}[1]{\left| #1 \right|}
\newcommand{\norm}[1]{\left\| #1 \right\|} 
\newcommand{\avg}{{\mathrm{avg}}} 
\newcommand{\eff}{\mathrm{eff}} 
\newcommand{\tl}{{\textsc{l}}}
\newcommand{\sfS}{{\mathsf{S}}}
\newcommand{\sfG}{{\mathsf{G}}}
\newcommand{\mS}{{\mathcal{S}}}
\newcommand{\frakF}{{\mathfrak{F}}}
\newcommand{\frakR}{{\mathfrak{R}}}
\newcommand{\frakG}{{\mathfrak{G}}}
\newcommand{\mP}{{\mathcal{P}}}
\newcommand{\scrS}{{\mathscr{S}}}
\newcommand{\scrN}{{\mathscr{N}}}
\newcommand{\scrF}{{\mathscr{F}}}
\newcommand{\bR}{{\mathbb{R}}}
\newcommand{\mbv}{\bm{v}}
\newcommand{\opt}{{\mathrm{opt}}}
\renewcommand{\epsilon}{\varepsilon}
\newcommand{\appropto}{\mathrel{\vcenter{
  \offinterlineskip\halign{\hfil$##$\cr
    \propto\cr\noalign{\kern2pt}\sim\cr\noalign{\kern-2pt}}}}}
\definecolor{fluorescentpink}{rgb}{1.0, 0.08, 0.58}
\let\baraccent=\= 
\renewcommand{\=}[1]{\stackrel{#1}{=}} 
\newcommand{\thmref}[1]{\hyperref[#1]{Theorem~\ref{#1}}}
\newcommand{\lemmaref}[1]{\hyperref[#1]{Lemma~\ref{#1}}}
\newcommand{\propref}[1]{\hyperref[#1]{Property~\ref{#1}}}
\newcommand{\corollaryref}[1]{\hyperref[#1]{Corollary~\ref{#1}}}
\newcommand{\figref}[1]{\hyperref[#1]{Fig.~\ref{#1}}}
\newcommand{\tabref}[1]{\hyperref[#1]{Table~\ref{#1}}}
\newcommand{\figaref}[1]{\hyperref[#1]{Fig.~\ref{#1}(a)}}
\newcommand{\figbref}[1]{\hyperref[#1]{Fig.~\ref{#1}(b)}}
\newcommand{\figcref}[1]{\hyperref[#1]{Fig.~\ref{#1}(c)}}
\newcommand{\figdref}[1]{\hyperref[#1]{Fig.~\ref{#1}(d)}}
\newcommand{\figeref}[1]{\hyperref[#1]{Fig.~\ref{#1}(e)}}
\newcommand{\figfref}[1]{\hyperref[#1]{Fig.~\ref{#1}(f)}}
\renewcommand{\eqref}[1]{\hyperref[#1]{Eq.~(\ref{#1})}}
\newcommand{\secref}[1]{\hyperref[#1]{Sec.~\ref{#1}}}
\newcommand{\eqsref}[2]{\hyperref[#1]{Eqs.~(\ref{#1})-(\ref{#2})}}
\newcommand{\appref}[1]{\hyperref[#1]{Appx.~\ref{#1}}}
\begin{document}

\title{Stabilizer codes for Heisenberg-limited many-body Hamiltonian estimation}

\author{Santanu Bosu Antu}
\affiliation{Perimeter Institute for Theoretical Physics, Waterloo, Ontario N2L 2Y5, Canada}
\affiliation{Department of Applied Physics and Yale Quantum Institute, New Haven, Connecticut 06520, USA}

\author{Sisi Zhou}
\email{sisi.zhou26@gmail.com}
\affiliation{Perimeter Institute for Theoretical Physics, Waterloo, Ontario N2L 2Y5, Canada}
\affiliation{Department of Physics and Astronomy, Department of Applied Mathematics, and Institute for Quantum Computing, University of Waterloo, Ontario N2L 3G1, Canada}


\begin{abstract}

Estimating many-body Hamiltonians has wide applications in quantum technology. By allowing coherent evolution of quantum systems and entanglement across multiple probes, the precision of estimating a fully connected $k$-body interaction can scale up to $(n^kt)^{-1}$, where $n$ is the number of probes and $t$ is the probing time. However, the optimal scaling may no longer be achievable under quantum noise, and it is important to apply quantum error correction in order to recover this limit. In this work, we study the performance of stabilizer quantum error correcting codes in estimating many-body Hamiltonians under noise. When estimating a fully connected $ZZZ$ interaction under single-qubit noise, we showcase three families of stabilizer codes---thin surface codes, quantum Reed--Muller codes and Shor codes---that achieve the scalings of $(nt)^{-1}$, $(n^2t)^{-1}$ and $(n^3t)^{-1}$, respectively, all of which are optimal with $t$. We further discuss the relation between stabilizer structure and the scaling with $n$, and identify several no-go theorems. For instance, we find codes with constant-weight stabilizer generators can at most achieve the $n^{-1}$ scaling, while the optimal $n^{-3}$ scaling is achievable if and only if the code bears a repetition code substructure, like in Shor code. 
\end{abstract}

\maketitle


\section{Introduction}

High-precision measurement of parameters, or metrology, is crucial in all branches of science, especially in physics~\cite{giovannetti2011advances,degen2017quantum,pezze2018quantum,pirandola2018advances,jiao2023quantum}. Classically, a common method to estimate an unknown parameter is to repeat the same experiment multiple times. As a consequence of the central limit theorem, the estimation error will scale as $1/\sqrt{t}$, known as the standard quantum limit (SQL), where $t$ is the time of experiment. Quantum mechanics allows us to achieve better precision by adopting the advantage of long-time coherent evolution, which results in an estimation error of order $1/t$, known as the \emph{Heisenberg limit} (HL)~\cite{giovannetti2004quantum,giovannetti2006quantum} which quadratically outperforms the SQL. 

The HL, however, can be fragile against quantum noise~\cite{huelga1997improvement,escher2011general,demkowicz2012elusive,demkowicz2017adaptive,zhou2018achieving}. To address this issue, quantum error correction (QEC) has been proposed as a tool to recover the HL under noise~\cite{kessler2014quantum,arrad2014increasing,dur2014improved,sekatski2017quantum,demkowicz2017adaptive,zhou2018achieving,layden2019ancilla,zhou2020optimal,zhou2021asymptotic,zhou2024achieving}. Specifically, to estimate a Hamiltonian parameter under Markovian noise, it was known the HL is achievable if and only if the ``Hamiltonian-not-in-Lindblad-span'' (HNLS) condition is satisfied~\cite{demkowicz2017adaptive,zhou2018achieving}, in which case a QEC protocol was proposed to recover the HL. Despite the generality of the QEC protocol, its implementation can be challenging in practice, partially due to the lack of efficient encoding and decoding procedure, and the requirement of a noiseless ancillary system as part of the code~\cite{zhou2018achieving,zhou2021asymptotic}. Using stabilizer codes~\cite{gottesman1997stabilizer,nielsen2001quantum}, a class of QEC codes with natural encoding and decoding methods via stabilizer measurements and no requirement of noiseless ancilla, for quantum metrology can alleviate the above constraints. So far, stabilizer codes have rarely been explored for quantum metrology in previous works~\cite{zhou2018achieving,layden2019ancilla,zhou2020optimal,zhou2021asymptotic,zhou2024achieving}, except for repetition codes for phase estimation against bit-flip noise~\cite{kessler2014quantum,arrad2014increasing,dur2014improved}. 

In multi-probe systems, estimating the strength of Hamiltonians has always been a major topic in quantum metrology~\cite{wineland1992spin,bollinger1996optimal,leibfried2004toward,higgins2007entanglement,giovannetti2004quantum,giovannetti2006quantum,kaubruegger2021quantum,huang2023learning,dutkiewicz2023advantage,boixo2007generalized}. The scaling of the estimation error with respect to the number of probes $n$ can achieve up to $1/n$ for 1-local Hamiltonians, using entangled states, e.g., the Greenberger--Horne--Zeilinger (GHZ) state~\cite{giovannetti2006quantum} and we will refer to it as the \emph{Heisenberg scaling} (HS)\footnote{Note that the scaling $1/n$ can also sometimes be referred to as the Heisenberg limit (with respect to the number of probes). However, in this paper we call it the HS to distinguish it from the HL with respect to probing time $t$.}.  

Interestingly, the estimation with higher-order interactions can go beyond the HS~\cite{boixo2007generalized,beau2017nonlinear,czajkowski2019many} (even without entanglement~\cite{boixo2008quantum}). For example, estimating a $k$-local Hamiltonian can reach a precision of order $1/n^k$, usually referred to as the \emph{super-Heisenberg scaling} (SHS) when $k > 1$~\cite{boixo2007generalized}. Similar to the HL, the HS and SHS are also compromised by the effect of quantum noise. For example, the HS for estimating 1-local Hamiltonians will degrade to $1/\sqrt{n}$ under generic single-qubit noise~\cite{demkowicz2012elusive,escher2011general,demkowicz2017adaptive,zhou2018achieving}. In general, it is of great interest to explore which types of SHS are recoverable for estimating many-body Hamiltonians under different types of quantum noise.

In this work, we consider the achievability of the HL, HS and SHS for estimating a fully connected $ZZZ$ interaction under single-qubit noise using stabilizer codes. First, we will present three classes of $[[n,1,3]]$ stabilizer codes---thin surface codes~\cite{kitaev2003fault,dennis2002topological,fowler2012surface}, quantum Reed-Muller codes~\cite{steane1999quantum} and Shor codes~\cite{shor1995scheme}---that achieve the HL and HS. This demonstrates the applicability of stabilizer codes in noisy quantum metrology. In particular, the optimal SHS $1/n^3$ can be achieved by Shor codes. Secondly, we will dive into the achievability of the SHS and explore the roles of different code structures, e.g., non-locality of stabilizer generators, repetition code substructures, etc., that are necessary for achieving different types of SHS. One prominent result is the non-achievability of the SHS when stabilizer generators have constant weights, e.g., for quantum low-density parity-check (LDPC) codes~\cite{breuckmann2021quantum}. These no-go results then imply the optimality of the previously mentioned three classes of codes under different constraints on code structures, and may further provide guidance on identifying or constructing stabilizer codes that are suitable for many-body Hamiltonian estimation.

Our work studies Hamiltonian estimation under Markovian noise during time evolution with the novel stabilizer QEC technique. Here we comment on the relationship between our work and previous literature. First, this work focuses on Hamiltonian estimation from time evolution of the Hamiltonian~\cite{huang2023learning,dutkiewicz2023advantage,boixo2007generalized,granade2012robust,hincks2018hamiltonian,wiebe2014hamiltonian,evans2019scalable,yu2023robust,li2020hamiltonian,hangleiter2024robustly,stilck2024efficient}, where input states, quantum controls during the evolution and measurements can be chosen by the observer, a framework different from Hamiltonian estimation from Gibbs states~\cite{anshu2020sample,haah2022optimal,bairey2019learning,qi2019determining}. In our case, the input states, quantum controls, and measurements are logical states, quantum error correction, and logical measurements, respectively, and our goal is the achieve the HL with respect to the probing time~\cite{huang2023learning,dutkiewicz2023advantage,boixo2007generalized}. Second, the noise model we consider is the Markovian noise during the evolution of the Hamiltonian, which is different from other literature that primarily focused on the state preparation, measurement, and gate implementation errors~\cite{wiebe2014hamiltonian,evans2019scalable,yu2023robust}. Our results represent the ultimate achievable estimation limits under stochastic noise when perfect quantum controls are available. Third, we treated the Hamiltonian estimation problem as a single-parameter estimation problem~\cite{boixo2007generalized} where the strength of an all-to-all $ZZZ$ interaction is to be measured and we analyze the scaling of the estimation precision with respect to the number of qubits. This is in contrast to (although related to) the multi-parameter approach~\cite{huang2023learning,dutkiewicz2023advantage,granade2012robust,hincks2018hamiltonian,wiebe2014hamiltonian,evans2019scalable,yu2023robust,li2020hamiltonian,hangleiter2024robustly,stilck2024efficient} where the goal is to show identifiability of the Hamiltonian coefficients, and to optimize the sample complexity with respect to the system size and the estimation error. Finally, our work is based on the point estimation approach using the quantum Cram\'{e}r--Rao bound, where the range of the unknown parameter is assumed to be priorly narrowed down to a small neighbourhood and our goal is to perform high-precision estimation in the neighbourhood. The Cram\'{e}r–Rao approach characterizes the asymptotic behaviour of parameter estimation when the number of experiments is sufficiently large. This contrasts with the Bayesian approach~\cite{jarzyna2015true,rubio2018non}, which selects estimators, measurement schemes, and initial states adaptively based on prior probabilities~\cite{granade2012robust, wiebe2014hamiltonian,evans2019scalable,hincks2018hamiltonian} and is more relevant in scenarios where the target estimation precision is a constant and the number of allowed experiments is limited.

\section{Formulation of the Problem}\label{sec: formulation of the problem}

Hamiltonian evolution under Markovian noise can be described by the Lindblad master equation~\cite{gorini1976completely,lindblad1976generators,breuer2002theory}
\begin{equation}
\label{eq:master}
\frac{d\rho}{dt} = -i[H,\rho] + \sum_{a} \left(L_a \rho L_a^\dagger - \frac{1}{2}\{L_a^\dagger L_a,\rho\}\right), 
\end{equation}
where $\rho$ is the density matrix of the probe system, $H$ is the Hamiltonian and $\{L_a\}_a$ are the Lindblad operators describing quantum noise. We consider the situation where the probe system consists of $n$ qubits ($n\geq3$), 
\begin{equation}
\label{eq:hamt}
    H = \omega \sum_{1 \leq i<j<k\leq n} Z_iZ_jZ_k + H_0 \coloneqq \omega G + H_0,
\end{equation}
where $H_0$ is any (unknown) 2-local Hamiltonian, $\omega$ is the strength of the 3-local Hamiltonian $G$ that we want to estimate, and $Z_i$ represents the Pauli-Z operator acting on the $i$-th qubit (we will use $X,Y,Z$ to denote Pauli-X,Y,Z operators, respectively). The set $\{L_a\}_a$ consists of local operators (i.e., each $L_a$ acts non-trivially on only one qubit). We further assume that the noise is complicated enough that the Lindblad span 
\begin{align}
    \mS &\coloneq {\rm span}_{\rm H}\{I,L_a,L_{a'}^\dagger,L_a^\dagger L_{a'}\}_{\forall {a,a'}} \\ & = {\rm span}_{\rm H}\{P_iP_j\}_{\forall {i,j;P\in\{X,Y,Z,I\}}}
\end{align}contains all two-local Hermitian operators (${\rm span}_{\rm H}\{\cdot\}$ denotes the set of all Hermitian operators that can be written as linear combinations of operators in $\{\cdot\}$). It is a natural assumption of quantum noise, and is satisfied by, for example, single-qubit depolarizing noise, single-qubit amplitude damping noise, etc. 

In general, the HNLS condition~\cite{zhou2018achieving} states that it is possible to achieve the HL if and only if $G \not\in \mS$, which is satisfied in our setting. Due to the HNLS condition~\cite{zhou2018achieving}, our scenario represents one of the \emph{simplest} cases of many-body Hamiltonian estimation where a QEC code can be used to recover the HL under single-qubit noise. This is because, for arbitrary single-qubit errors, the Lindblad span includes all two-body Pauli terms, and, consequently, the HL cannot be recovered for any two-body Hamiltonian estimation. This motivates our focus on three-body Hamiltonians, which lie just outside of the Lindblad span and therefore allow for Heisenberg-limited estimation using QEC.

Consider the quantum metrological protocol where an initial $n$-qubit state $\rho(0)$ is prepared, quantum controls are applied sufficiently fast during the evolution, and $\omega$ is estimated through quantum measurement on the output state $\rho_\omega(t)$ at time $t$~\cite{sekatski2017quantum,demkowicz2017adaptive,zhou2018achieving}. The estimation precision $\delta\omega$ can be given by the quantum Cram\'{e}r--Rao bound~\cite{holevo2011probabilistic,helstrom1969quantum} 
\begin{equation}
    \delta\omega \geq 1/\sqrt{\nu F(\rho_\omega(t))},
\end{equation} 
where $\nu$ is the number of repeated experiments and $F(\rho_\omega(t))$ is the quantum Fisher information (QFI) of the state $\rho_\omega(t)$~\cite{braunstein1994statistical,paris2009quantum}. The quantum Cram\'{e}r--Rao bound is saturable as $\nu \rightarrow \infty$~\cite{paris2004quantum}, and thus the scaling of $F(\rho_\omega(t))$ determines the scaling of $\delta\omega$ with respect to $t$ and $n$. 
Specifically, we say that the HL is achieved when $\delta \omega \sim t^{-1}$ or $F(\rho_\omega(t)) = \Theta(t^2)$, the HS is achieved when $\delta \omega \sim n^{-1}$ or $F(\rho_\omega(t)) = \Theta(n^2)$, and the SHS is achieved when $\delta \omega \sim n^{-k}$ or $F(\rho_\omega(t)) = \Theta(n^{2k})$ for any $k > 1$. 

In the noiseless case ($L_a = H_0 = 0$), the optimal initial state is the GHZ state 
\begin{equation}
\ket{\psi(0)} = \frac{1}{
\sqrt{2}}({\ket{0^{\otimes n}}+\ket{1^{\otimes n}}}),
\end{equation}
which leads to an output state 
\begin{equation}
\ket{\psi_\omega(t)} = \frac{e^{i\omega \frac{n(n-1)(n-2)}{6}}\ket{0^{\otimes n}}+e^{-i\omega \frac{n(n-1)(n-2)}{6}}\ket{1^{\otimes n}}}{
\sqrt{2}},
\end{equation}
after evolution time $t$, and the corresponding QFI of the output state $\rho_\omega(t) = \ket{\psi_\omega(t)}\bra{\psi_\omega(t)}$ is 
\begin{equation}
    F(\rho_\omega(t)) = 4 \left(\frac{n(n-1)(n-2)}{6}\right)^2 t^2 = \Theta(n^6 t^2),
\end{equation}
which achieves the HL and the SHS.

With single-qubit errors, the QEC protocol utilizes a QEC code that perfectly corrects them, i.e., $\Pi S \Pi \propto \Pi$ for all $S \in \mS$, where $\Pi$ is the projection onto the code subspace~\cite{bennett1996mixed,knill1997theory}. In the fast control limit (i.e., fast quantum controls can be applied at an arbitrarily high frequency), the evolution of $\rho_\tl=\Pi \rho \Pi$ is given by
\begin{equation}
    \frac{d\rho_\tl}{dt} = -i\omega [G_\eff,\rho_\tl],
\end{equation}
where $G_\eff = \Pi G\Pi \not\propto \Pi$. (Note that in the noiseless case, $\Pi = I$ and the discussion below also holds.)
The QFI is maximized when the initial state is chosen as $\ket{\psi(0)} = \frac{1}{
\sqrt{2}}\left(\ket{\lambda_{\min}}+\ket{\lambda_{\max}}\right)$, 
that leads to
\begin{equation}
    F(\rho_\omega(t)) = (\lambda_{\max}-\lambda_{\min})^2 t^2 \eqcolon (\Delta G_\eff)^2 t^2, 
    \label{eqn: Fisherlambda_minmax}
\end{equation}
which achieves the HL, where $\lambda_{{\min},{\max}}$ are the smallest (and largest) eigenvalues that correspond to the eigenstates $\ket{\lambda_{{\min},{\max}}}$ of $G_\eff$ that are restricted in the code subspace. $\Delta G_\eff := \lambda_{\max}-\lambda_{\min}$. Note that the above implies that it is sufficient to consider a two-level logical system, as we can always, without loss of generality, restrict the code subspace to ${\rm span}\{\ket{\lambda_{{\min}}},\ket{\lambda_{{\max}}}\}$. 

Among all possible ancilla-assisted QEC codes,
the optimal QEC protocol~\cite{zhou2018achieving} can achieve $(\Delta G_\eff)_{\opt} = 2 \norm{G-\mS}$, where $\norm{G-\mS}$ is the distance between $G$ and $\mS$ in terms of the operator norm. Specifically in our case (see \appref{app:opt-qfi}), 
\begin{equation}
\label{eq:opt-coeff}
    (\Delta G_\eff)_{\opt} = 2\!\min_{\text{2-local }\!S} \norm{G - S} = \frac{n^3}{12} + O(n^2). 
\end{equation}
Comparing this to the noiseless case, where 
\begin{equation}
    \Delta G = \frac{n^3}{3} + O(n^2), 
\end{equation}
we see that the optimal QFI for estimating $ZZZ$ interaction under generic qubit noise is $F(\rho_\theta(t)) = \Theta(n^6t^2)$, which bears the same scaling as the noiseless one.

Although QEC can in principle recover the HL and the SHS for estimating the 3-local Hamiltonian $G$ from the above discussion, it is unclear how the QEC protocol (with encoding, decoding and measurement operations) can be implemented in reality. 
In this work, we consider stabilizer QEC for the task of estimating $ZZZ$ interactions, which is the most popular family of QEC and was demonstrated recently in experiments across multiple platforms~\cite{google2023suppressing,ryan2021realization,bluvstein2024logical}. One advantage of it is the standard encoding, decoding and measurement procedures via syndrome measurements~\cite{gottesman1997stabilizer} that were currently unavailable in the known QEC protocols for metrology. In addition, as stabilizer codes have already been theoretically extensively studied in previous literature for quantum memory, quantum computation, etc., it is natural to explore their applications in other noisy information processing tasks like quantum metrology and identify the code properties that are crucial for many-body Hamiltonian estimation.

\section{Achieving the HL, HS and SHS}
\label{sec:achieving}
As discussed in the previous section, the HL is achievable using a QEC code  if and only if $G_\eff = \Pi G \Pi$ acts non-trivially in the code subspace. In this section, we first establish a relation between $\Delta G_\eff$ and the logical operators of the stabilizer code, defined by an abelian subgroup $\scrS$ of the $n$-qubit Pauli group $\{\pm 1,\pm i\} \times \{I,X,Y,Z\}^{\otimes n}$ which stabilizes the code space, and then present three families of stabilizer codes that achieve both the HL and the HS (or SHS). 

\subsection{Relation between QFI and \texorpdfstring{$Z^{\otimes 3}$}{Z3}-type Logical Operators}

Here we use the stabilizer formalism to establish a relation between the QFI and the number of \emph{$Z^{\otimes 3}$-type logical operators}, i.e., operator $Z_iZ_jZ_k$ for some $i,j,k$ that maps logical states to logical states in a non-trivial way, and this relation will be used throughout the remainder of the paper.

\begin{lemma}
Consider using an $[[n,1,3]]$\footnote{We use $[[n,k]]$ ($[[n,k,d]]$) to represent a stabilizer code that encodes $k$ logical qubits into $n$ physical qubits (with distance $d$).} stabilizer code to estimate $\omega$ in \eqref{eq:master} and \eqref{eq:hamt}. Let $\ell$ be the number of $Z_i Z_j Z_k$ operators appearing in $G$ that are logical operators of the code. Then the output QFI 
\begin{equation}
    F(\rho_\omega(t)) \leq 4\ell^2t^2. 
\end{equation}
The equality holds when the stabilizer code satisfies
\begin{property}
\label{prop}
Any $Z$-type stabilizer (i.e., stabilizer that consists of only $I$ and $Z$) has a positive sign\footnote{For example, $ZZIII$ has a positive sign; $-ZZIII$ has a negative sign. ($\pm i ZZIII$ cannot be in any stabilizer group because their squares are $-I$.)}. 
\end{property}

\label{lemma:QFIlogicalrelation}
\end{lemma}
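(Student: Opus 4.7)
The plan is to expand $G_\eff=\Pi G\Pi$ term by term using the stabilizer formalism and then invoke \eqref{eqn: Fisherlambda_minmax} to translate a bound on $\Delta G_\eff$ into the desired bound on the QFI. The basic dichotomy I will use is that any Pauli operator either anticommutes with some element of $\scrS$ (and is killed by $\Pi(\cdot)\Pi$) or lies in the centralizer of $\scrS$, in which case it acts on the code either as $\pm\Pi$ (if it is in $\scrS$ up to a sign) or as a non-trivial logical operator times $\Pi$.

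First I would classify each term $Z_iZ_jZ_k$ in $G$ into three cases: (i) anticommuting with some element of $\scrS$, giving $\Pi Z_iZ_jZ_k\Pi=0$; (ii) lying in $\scrS$ up to a sign, contributing $\epsilon\Pi$ with $\epsilon\in\{\pm1\}$; (iii) a non-trivial logical operator, of which there are $\ell$ by definition. The next step is to show that every case-(iii) term represents the logical-$Z$ coset rather than $\bar{X}$ or $\bar{Y}$: fix any such term as a representative $\bar{Z}$; for any other case-(iii) term $Z_{i'}Z_{j'}Z_{k'}$, the product $\bar{Z}\cdot Z_{i'}Z_{j'}Z_{k'}$ is $Z$-type and hence commutes with $\bar{Z}$, so it cannot represent $\bar{X}$ or $\bar{Y}$ (which anticommute with $\bar{Z}$); therefore it lies in $\scrS$, forcing $Z_{i'}Z_{j'}Z_{k'}\Pi=\pm\bar{Z}\Pi$. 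Assembling the pieces gives $G_\eff=c_1\Pi+c_2\bar{Z}\Pi$ with $|c_2|\leq\ell$; since the code encodes one logical qubit, $G_\eff$ has eigenvalues $c_1\pm c_2$ on the two-dimensional code subspace, so $\Delta G_\eff=2|c_2|\leq 2\ell$ and $F(\rho_\omega(t))\leq 4\ell^2 t^2$. The edge case $\ell=0$ is trivial: $G_\eff\propto\Pi$ forces $\Delta G_\eff=0$, matching the bound.

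For equality under \propref{prop}, I would carry out the sign bookkeeping. In case (ii), \propref{prop} fixes $\epsilon=+1$, so every stabilizer $Z_iZ_jZ_k$ contributes $+\Pi$. In case (iii), writing $Z_{i'}Z_{j'}Z_{k'}=\bar{Z}\cdot S$ with $S$ a $Z$-type Pauli, the same property forces $+S\in\scrS$ and hence $Z_{i'}Z_{j'}Z_{k'}\Pi=+\bar{Z}\Pi$; all $\ell$ contributions add coherently, giving $c_2=\ell$ and $\Delta G_\eff=2\ell$. The main subtlety---and the place where I expect to have to think carefully---is this sign alignment: without the hypothesis of \propref{prop}, different weight-3 logical $Z$-type terms could contribute with opposite signs and partially cancel, so verifying that positivity of $Z$-type stabilizers is exactly what rules out such cancellations is the heart of the equality argument.
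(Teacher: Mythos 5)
Your proof is correct and follows essentially the same route as the paper: classify each $Z_iZ_jZ_k$ by its relation to $\scrS$ and $\scrN(\scrS)$, use commutativity to show all nontrivial contributions represent a single logical-$Z$ coset, bound $\Delta G_\eff \leq 2\ell$, and invoke \propref{prop} to rule out sign cancellations for equality. Your decomposition $G_\eff = c_1\Pi + c_2\bar{Z}\Pi$ and the explicit $\pm$ bookkeeping make the sign-alignment step somewhat more transparent than the paper's terse closing sentence, but the underlying argument is identical.
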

\begin{proof}
    Let $\scrS$ and $\scrN(\scrS)$ be the stabilizer and normalizer sets of the stabilizer code, respectively. A term $Z_iZ_jZ_k$ of $G$ belongs to one of the following three classes: (1)~$Z_iZ_jZ_k\in \scrS$, (2)~$Z_iZ_jZ_k\not\in \scrN(\scrS)$, and (3)~$Z_iZ_jZ_k\in$ $\scrN(\scrS)\setminus\scrS$. The operator is called a \emph{logical operator} if it belongs to class~(3). By \eqref{eqn: Fisherlambda_minmax}, the QFI is proportional to $\Delta G_\eff$ where $G_\eff = \Pi G \Pi = \sum_{ijk} \Pi Z_iZ_jZ_k \Pi$ with $\Pi$ being the projection onto the code space. We separately calculate the contributions of the three types of terms to the $\Delta G_\eff$.
    
    \begin{enumerate}[wide, labelwidth=!,itemindent=!,labelindent=0pt, leftmargin=0em, label={(\arabic*)}, parsep=0pt]
        \item Suppose $Z_iZ_jZ_k\in \scrS$. It immediately follows that $Z_iZ_jZ_k \Pi = \Pi$ such that $\Pi Z_iZ_jZ_k \Pi = \Pi$. Hence, the terms that are in $\scrS$ do not contribute to $\Delta G_\eff$.
        
        \item Suppose that $Z_i Z_j Z_k \not \in \scrN(\scrS)$. Then, there is an operator $Q\in \scrS$ that anti-commutes with $Z_iZ_jZ_k$. 
        So, $\Pi Z_iZ_jZ_k \Pi = \Pi Q Z_iZ_jZ_k \Pi = - \Pi Z_iZ_jZ_k Q \Pi = - \Pi Z_iZ_jZ_k \Pi$, which leads to $\Pi Z_iZ_jZ_j \Pi = 0$. Hence, the terms that are not in the normalizer group also do not contribute to the QFI.

        \item Suppose that $Z_iZ_jZ_k\in \scrN(\scrS)\setminus \scrS$. As all the $ZZZ$ terms commute with each other, the non-trivial logical operators must be different representatives of the same logical operator up to a $\pm 1$ coefficient. In addition, $\Pi Z_iZ_jZ_k \Pi$ must correspond to a logical Pauli operator in the two-dimensional code subspace, which means $\Delta ( \Pi Z_iZ_jZ_k \Pi) = 2$.
    \end{enumerate}
    Hence, if the number of $Z^{\otimes 3}$-type logical operators in $G$ is $\ell$, the QFI is given by $F(\rho_\omega(t)) = (\Delta G_\eff)^2 t^2 \leq 4\ell^2t^2$. When \propref{prop} holds, all $Z^{\otimes 3}$-type logical operators must correspond to the same logical operator with the same sign, and the equality holds. 
\end{proof}

We note that any stabilizer code can be modified to satisfy \propref{prop} by only changing the signs of its stabilizer generators. Thus, we always assume, without the loss of generality, that \propref{prop} holds in all discussions below. For example, the surface code, the Reed--Muller code and the Shor code we consider below all naturally satisfy \propref{prop}. Given a stabilizer code, to achieve \propref{prop}, we first find a set of its stabilizer generators which has the maximal number of $Z$-type stabilizers. That means the rest of the stabilizer generators in the set cannot generate a $Z$-type stabilizer. This set can be efficiently obtained by Gaussian elimination. Then we modify all signs of these $Z$-type stabilizer generators to the positive sign. For example, if we have a stabilizer generator $-ZZIII$, we switch the sign to $+1$ and define a new stabilizer code where $-ZZIII$ is replaced by $ZZIII$. This guarantees \propref{prop} holds because any $Z$-type stabilizer can only be expressed as products of $Z$-type stabilizer generators and thus also must has a positive sign. 

\lemmaref{lemma:QFIlogicalrelation} implies that the HL is achievable for any stabilizer codes with $Z^{\otimes 3}$-type logical operators. The scaling of the QFI with respect to $n$ is determined by the scaling of $\ell$ with respect to $n$. In the following parts of this section, we will apply \lemmaref{lemma:QFIlogicalrelation} to demonstrate the achievability of the HS ($F(\rho_\omega(t)) = \Theta(n^2)$) and SHS ($F(\rho_\omega(t)) = \Theta(n^4),\Theta(n^6)$) using three different families of stabilizer codes, or more specifically, Calderbank--Shor--Steane (CSS) codes~\cite{calderbank1996good,steane1996error}. 

Note that \lemmaref{lemma:QFIlogicalrelation} automatically applies to the estimation of higher-order $Z^{\otimes k}$ terms, i.e. when $G = \sum_{\{i_k\}\subseteq\{1,2,\ldots,n\}} Z_{i_1}Z_{i_2}\cdots Z_{i_k}$ with $k \geq 3$, as the proof of \lemmaref{lemma:QFIlogicalrelation} is still correct when we replace $Z_iZ_jZ_k$ with higher-order $Z$-type operators $Z_{i_1}Z_{i_2}\cdots Z_{i_k}$.

\subsection{Thin Surface Code}\label{sec:thin-surface}

First, we consider a special kind of surface code~\cite{kitaev2003fault,dennis2002topological,fowler2012surface} that we will refer to as the \textit{thin surface code}, which corrects single-qubit $Z$ errors. \figref{fig:planarstabs} shows the $[[n,1,3]]$ thin surface code with $n = 33$. The qubits are placed on the edges of a $(2 \times (n-3)/5)$ lattice. On the primary (solid) lattice, the $X$-type stabilizer generators are $X^{\otimes 4}$ plaquette operators in the bulk and on the smooth edges, and $X^{\otimes 3}$ plaquette operators on the rough edges. Similarly, the $Z$-type stabilizer generators are $Z^{\otimes 4}$ and $Z^{\otimes 3}$ plaquette operators on the dual (dashed) lattice, 

\begin{figure}[tbp]
\centering
\includegraphics[width=0.48\textwidth]{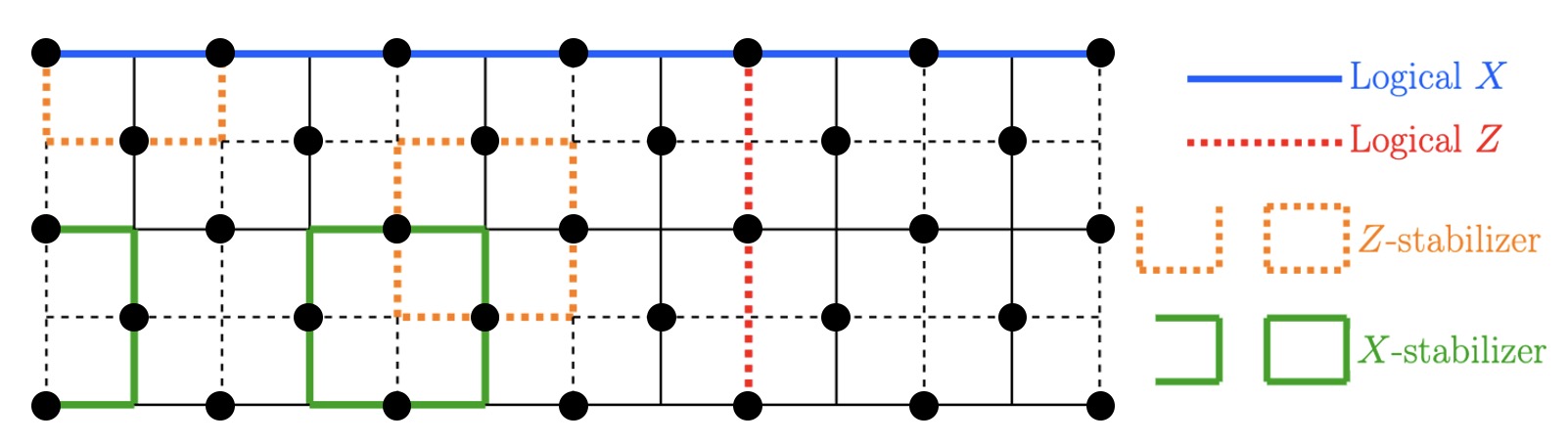}
\caption{$[[33,1,3]]$ thin surface code. Physical qubits are placed on the edges of the solid lattice (including the start and end points of the horizontal solid lines). $X$/$Z$-type stabilizer and logical operators are represented by solid/dashed lines that pass through the involved qubits.}
\label{fig:planarstabs}
\end{figure}

The quantum code encodes one logical qubit into $n$ physical qubits. A logical $X$ operator is a string of $X$ operators connecting the two rough edges on the primary lattice. Similarly, a logical $Z$ operator is a string of $Z$ operators connecting the two smooth edges on the dual lattice.

For the $[[n,1,3]]$ thin surface code, the $Z^{\otimes 3}$-type logical operators must correspond to one vertical $Z^{\otimes 3}$ strings on the dual lattice. This is because multiplying any one of them by a stabilizer can either increase its weight, or transform it to another $Z^{\otimes 3}$-type logical operator. By a simple counting, the number of such $Z^{\otimes 3}$ strings for an $n$-qubit code is $(n+2)/5$. Therefore, by \lemmaref{lemma:QFIlogicalrelation}, we have the following result. 
\begin{theorem}
    The $[[n,1,3]]$ thin surface code achieves a QFI of ${4(n+2)^2t^2}/{25} = \Theta(n^2t^2)$.
    \label{thm: thin planar scaling}
\end{theorem}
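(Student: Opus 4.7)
The plan is to apply \lemmaref{lemma:QFIlogicalrelation} directly: since (as mentioned earlier in the text) the thin surface code can be chosen so that Property~\ref{prop} holds---its $Z$-type stabilizer generators are plaquette operators on the dual lattice that can all be assigned positive signs, and products of such generators then automatically retain positive sign---the bound of \lemmaref{lemma:QFIlogicalrelation} is saturated. The QFI therefore equals $4\ell^2 t^2$, and the whole task reduces to counting $\ell$, the number of unordered triples $\{i,j,k\}$ such that $Z_iZ_jZ_k$ is a non-trivial logical operator of the code.

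The core combinatorial step is to show that every weight-3 $Z$-type element of $\scrN(\scrS)\setminus\scrS$ is exactly one of the vertical $Z^{\otimes 3}$ columns on the dual lattice connecting the two smooth edges. Since the code has distance $3$, any $Z^{\otimes 3}$ logical operator is automatically a minimum-weight representative of the logical $\bar Z$. I would then argue by ``stabilizer reduction'': starting from any candidate $Z^{\otimes 3}$ logical operator, multiplying by a $Z$-type dual-lattice plaquette either strictly increases its weight above $3$ (ruling it out as a weight-$3$ representative), or shifts it horizontally to another vertical column of the same form. Because the dual lattice is only two plaquettes tall, this forces every weight-$3$ $Z$-logical to consist of three qubits sitting on a common vertical line of the dual lattice---no L-shaped, staggered, or diagonal configuration can contribute.

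Having fixed the form, the counting is a direct enumeration. On the $2\times(n-3)/5$ primal lattice, each vertical $Z^{\otimes 3}$ dual-lattice string is parametrized by its horizontal position, and a straightforward edge count shows that there are $(n+2)/5$ such positions (matching the $n=33$, $\ell=7$ example). Substituting $\ell = (n+2)/5$ into \lemmaref{lemma:QFIlogicalrelation} yields
\begin{equation*}
F(\rho_\omega(t)) \;=\; 4\ell^2 t^2 \;=\; \frac{4(n+2)^2 t^2}{25} \;=\; \Theta(n^2 t^2),
\end{equation*}
as claimed.

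The main obstacle is the exclusion step in the second paragraph: rigorously ruling out weight-$3$ $Z$-type logicals that are not vertical dual-lattice columns, particularly near the rough boundaries where the $X^{\otimes 3}$ boundary plaquettes break the bulk translation symmetry. A clean way to handle this is to note that any $Z$-type logical corresponds to a cycle on the dual lattice relative to the rough boundary, and the unique minimum-weight cycle class connecting the two smooth boundaries in a height-$2$ strip is realised precisely by the vertical columns. This homological argument simultaneously establishes both the form and the counting claim.
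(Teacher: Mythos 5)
Your proposal follows essentially the same route as the paper: invoke \lemmaref{lemma:QFIlogicalrelation} (with \propref{prop} holding), argue that every $Z^{\otimes 3}$-type logical operator must be a vertical dual-lattice column, and count $(n+2)/5$ of them. Your closing homological remark in fact makes the exclusion step (that no non-column weight-$3$ $Z$-type logical exists) more explicit than the paper's terse ``multiply by a stabilizer either increases weight or shifts to another column'' phrasing, but both arguments rest on the same underlying surface-code fact that $Z$-type logicals are dual chains between smooth boundaries.
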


We also note that the result above can readily generalize to estimating $k$-body terms in Hamiltonian (when $k$ is a constant independent of $n$ and $n$ is sufficiently large). We observe that the $[[n,1,3]]$ thin surface code has $\Theta(n^{\lfloor k/3 \rfloor})$ $Z^{\otimes k}$-type logical operators, implying a QFI of $\Theta(n^{2\lfloor k/3 \rfloor}t^2)$. To prove this, we can divide the entire $n$ qubits into $\Theta(n)$ constant-size groups, e.g. groups of qubits in a $5 \times 5$ grid, with each group sufficiently large to accommodate at least one $Z^{\otimes 3}$-type logical operator, $Z^{\otimes 3}$-type, $Z^{\otimes 4}$-type and $Z^{\otimes 5}$-type stabilizers. On one hand, we note that $\Omega(n^{\lfloor k/3 \rfloor})$ different $Z^{\otimes k}$-type logical operators can be constructed by picking $\lfloor k/3 \rfloor$ groups out of the entire $\Theta(n)$ groups and choose one $Z^{\otimes 3}$-type logical operator, $\lfloor k/3 \rfloor - 2$ $Z^{\otimes 3}$-type stabilizers and one $Z^{\otimes k - 3\lfloor k/3 \rfloor + 3}$-type stabilizer from each group. On the other hand, the number of $Z^{\otimes k}$-type logical operators is at most $O(n^{\lfloor k/3 \rfloor})$ because any $Z^{\otimes k}$-type logical operator can be supported on at most $\lfloor k/3 \rfloor$ groups, and the number of subsets containing at most $\lfloor k/3 \rfloor$ groups is $\Theta(n^{\lfloor k/3 \rfloor})$.

\subsection{Quantum Reed-Muller Code}\label{subsec:QRM}

The stabilizer group of a CSS code is generated by operators from two subsets $\sfS_x$ and $\sfS_z$, where the subset $\sfS_x$ ($\sfS_z$) consists of stabilizer generators that are tensor products of $X$ ($Z$) and $I$.
Here, we consider \textit{quantum Reed--Muller (QRM) codes} that are CSS codes whose $\sfS_x$ and $\sfS_z$ are obtained from the parity-check matrices of two \textit{classical Reed--Muller (RM) codes}~\cite{steane1999quantum, anderson2014fault}.

We first review the definition of classical RM codes~\cite{macwilliams1977theory}. Define $\mbv_0^T=(1,1,\ldots,1)\eqcolon \bm{1}_{2^m}^T$ as the $2^m$-tuple row vector consisting entirely of 1s. Throughout this section, we will use the notation $\bm{0}_{p}$ to denote a vector of $p$ zeros, and $\bm{1}_{p}$ to denote a vector of $p$ ones. For any integer $i$ with $1\leq i\leq m$, define the $2^m$-tuple row vector $\mbv_i^T=(\bm{0}_{2^{i-1}}^T|\bm{1}_{2^{i-1}}^T| \bm{0}_{2^{i-1}}^T| \ldots| \bm{1}_{2^{i-1}}^T)$, where ``$\,|\,$'' denotes the concatenation of row vectors. The generator matrix of $RM(1,m)$, i.e., the 1st order RM code of length $2^m$, is given by
\begin{align}
    \sfG(1,m)= \begin{pmatrix}
  \text{---} \bm{v}_0^T \text{---}\\
  \text{---} \bm{v}_1^T \text{---}\\
  \vdots\\
  \text{---} \bm{v}_m^T \text{---}\\
    \end{pmatrix}.
\end{align}
The $r$-th order RM code, denoted by $RM(r,m)$, is constructed by taking the Boolean products (denoted by $*$) of the codewords of a 1st-order RM code. In particular, the generator matrix $\sfG(r,m)$ of $RM(r,m)$ is obtained by taking the rows of $RM(1,m)$ along with all the distinct $2,3, \ldots, r$-fold Boolean products of the rows of $\sfG(1,m)$. For example, the set of rows of the generator matrix $\sfG(2,3)$ is $\{\bm{v}_0^T, \bm{v}_1^T, \bm{v}_2^T, \bm{v}_3^T, \bm{v}_1^T*\bm{v}_2^T, \bm{v}_1^T*\bm{v}_3^T, \bm{v}_2^T*\bm{v}_3^T\}$, where $\sfG(1,3)$ corresponds to the first four rows. It can be shown that $RM(r,m)$ is a $[2^m,N(m,r),2^{m-r}]$ code, where $N(m,r)\coloneqq 1+ \binom{m}{1}+\binom{m}{2}+ \cdots + \binom{m}{r}$ and $RM(r,m)^\perp =RM(m-r-1,m)$~\cite{macwilliams1977theory}.  

The QRM codes are derived from shortened RM codes $\overline{RM}(r,m)$ whose  generating matrix $\overline{\sfG}(r,m)$ is obtained by deleting the first row and the first column of $\sfG(r,m)$. The following three properties of a shortened RM code $\overline{RM}(r,m)$ will be useful to us. Their proofs are given in \appref{app:A}.

\begin{enumerate}[wide, labelwidth=!,itemindent=!,labelindent=0pt, leftmargin=0em, label={(\arabic*)}, parsep=0pt]
    \item $\overline{RM}(1,m)$ is strictly contained in $\overline{RM}(m-2,m)$. 
    \item The vector $\bm{1}_{2^m-1}$ and those corresponding to the rows of $\overline{\sfG}(m-2,m)$ form a basis for $\overline{RM}(1,m)^\perp$. 
    \item The number of codewords in $\overline{RM}(1,m)^\perp$ with weight $(2^m-4)$ is $\frac{1}{6}(4^m-3\cdot 2^m+2)$.
\end{enumerate}
Here, we focus on $QRM(1,m)$ which is an $[[n=2^m-1,1,3]]$ code (see a definition of general $QRM(r,m)$ in e.g., \cite{zeng2011transversality}). The set of $X$-type stabilizer generator $\sfS_x$ is obtained from the rows of $\overline{\sfG}(1,m)$ by substituting $1$s by $X$s and $0$s by $I$s in the tensor product forms of the operators. The elements of $\sfS_z$ are similarly obtained from $\overline{\sfG}(m-2,m)$ by replacing $1$s by $Z$s and $0$s by $I$s. Due to Property~(1), the basis codewords of the CSS code $QRM(1,m)$ are given by $\ket{z+\overline{RM}(1,m)}=\frac{1}{2^m}\sum_{y\in \overline{RM}(1,m)}\ket{z+y}$ for $z\in \overline{RM}(m-2,m)$~\cite{nielsen2001quantum}. Note that $Z^{\otimes n}$ commutes with all the $X$-type stabilizer generators since each generator is of even weight; however, $Z^{\otimes n}$ is not in the stabilizer group by Property~(2). 
Thus, $Z^{\otimes n}$ is a logical operator and one can choose a specific basis in the logical subspace such that it is a logical $Z$ operator, i.e., 
\begin{align}
    \overline{Z}:=Z^{\otimes n}=Z^{\otimes (2^m-1)}.
    \label{eqn: RMlogicalZ}
\end{align}
Following \lemmaref{lemma:QFIlogicalrelation}, we obtain the scaling of the QFI with respect to $n$ by calculating the number of weight-3 representatives of the logical Pauli-Z operator.

\begin{theorem}
The $[[n = 2^m - 1,1,3]]$ $QRM(1,m)$ code achieves a QFI of ${n^2(n-1)^2}t^2/{9} = \Theta(n^4t^2)$. 
\label{thm: qrm}
\end{theorem}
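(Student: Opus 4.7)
The plan is to apply \lemmaref{lemma:QFIlogicalrelation} and reduce the theorem to a combinatorial count of weight-3 representatives of $\overline{Z}$. Since $QRM(1,m)$ naturally satisfies \propref{prop} after fixing stabilizer signs, the lemma gives $F(\rho_\omega(t)) = 4\ell^2 t^2$, where $\ell$ is the number of $Z^{\otimes 3}$-type logical operators. So the entire task reduces to evaluating $\ell$ exactly.

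To set up the count, I would exploit the representation $\overline{Z} = Z^{\otimes n}$ from \eqref{eqn: RMlogicalZ}. Every $Z$-type logical operator has the form $\overline{Z}\cdot s$ for some $Z$-type stabilizer $s$, and $Z$-type stabilizers are in bijection with codewords of $\overline{RM}(m-2,m)$. If $s$ corresponds to a binary vector of Hamming weight $\mathrm{wt}(s)$, then $\overline{Z}\cdot s$ has weight $n - \mathrm{wt}(s)$. Requiring this to equal $3$ forces $\mathrm{wt}(s) = 2^m - 4$. Hence $\ell$ equals the number of weight-$(2^m-4)$ codewords of $\overline{RM}(m-2,m)$.

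Next I would bridge this quantity to Property~(3), which counts weight-$(2^m-4)$ codewords of the \emph{larger} code $\overline{RM}(1,m)^\perp$. By Property~(2), $\overline{RM}(1,m)^\perp$ decomposes as the disjoint union $\overline{RM}(m-2,m) \sqcup (\bm{1}_{2^m-1} + \overline{RM}(m-2,m))$. A weight-$(2^m-4)$ element of the coset $\bm{1}_{2^m-1} + \overline{RM}(m-2,m)$ corresponds to a weight-$3$ element of $\overline{RM}(m-2,m)$. The crux of the argument, and the step I expect to be the main (though mild) obstacle, is ruling these out: $\overline{RM}(m-2,m)$ is the shortening of $RM(m-2,m) = RM(1,m)^\perp$ at the first coordinate, so its minimum distance is at least that of $RM(m-2,m)$, which is $2^{m-(m-2)} = 4$. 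Thus $\overline{RM}(m-2,m)$ contains no weight-$3$ codewords, so every weight-$(2^m-4)$ codeword of $\overline{RM}(1,m)^\perp$ in fact lies in $\overline{RM}(m-2,m)$.

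Finally, I would substitute Property~(3) to obtain
\begin{equation}
\ell = \frac{1}{6}\left(4^m - 3\cdot 2^m + 2\right) = \frac{(2^m-1)(2^m-2)}{6} = \frac{n(n-1)}{6},
\end{equation}
using $n = 2^m - 1$. Plugging into \lemmaref{lemma:QFIlogicalrelation} then yields
\begin{equation}
F(\rho_\omega(t)) = 4\ell^2 t^2 = \frac{n^2(n-1)^2}{9}\, t^2 = \Theta(n^4 t^2),
\end{equation}
which is exactly the claim of the theorem.
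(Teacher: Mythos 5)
Your proof is correct and follows the paper's overall plan: invoke \lemmaref{lemma:QFIlogicalrelation}, reduce $\ell$ to the number of weight-$(2^m-4)$ $Z$-type stabilizers (equivalently, weight-$(2^m-4)$ codewords of $\overline{RM}(m-2,m)$), and then connect this to Property~(3). Where you proceed a bit differently is the bridge step — showing that the coset $\bm{1}_n + \overline{RM}(m-2,m)$ contributes no weight-$(2^m-4)$ vectors, so that the count in $\overline{RM}(m-2,m)$ equals that in $\overline{RM}(1,m)^\perp$. You argue this directly from the minimum distance: $\overline{RM}(m-2,m)$ is the shortening of the distance-$4$ code $RM(m-2,m)$ at the first coordinate, so it has no weight-$3$ codewords, and hence the coset contribution is empty. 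The paper instead invokes the evenness of codeword weights in the unshortened $RM(m-2,m)$ and reasons about weight-$4$ codewords with leading entry $1$ being absorbed back into $V$. Both routes reach $U=\emptyset$ and $\ell = \frac{n(n-1)}{6}$, but your minimum-distance argument is more direct and makes the emptiness of $U$ explicit rather than implicit; it is the cleaner way to see why the coset vanishes. The remainder — plugging $\ell$ into \lemmaref{lemma:QFIlogicalrelation} to get $F = \frac{n^2(n-1)^2}{9}t^2$ — coincides with the paper.
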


\begin{proof}
    Operators equivalent to the logical $Z$ operator in \eqref{eqn: RMlogicalZ} are obtained by multiplying $\overline{Z}$ with different stabilizers. Since the weight of $\overline{Z}$ is $2^m-1$, it follows that the number of $Z^{\otimes 3}$-type logical operators $\ell$ is the same as the number of $Z^{\otimes(n-3)} = Z^{\otimes(2^m - 4)}$ stabilizers. This amounts to calculating the number of weight-$(2^m-4)$ codewords in the shortened code $\overline{RM}(m-2,m)$.

    Below we prove the number of weight-$(2^m-4)$ codewords in  $\overline{RM}(m-2,m)$ is equal to that in $\overline{RM}(1,m)^\perp$. Using Property~(3), this implies that $\ell = \frac{1}{6}(4^m-3\cdot 2^m+2) = \frac{1}{6}n(n-1)$. Combining with \lemmaref{lemma:QFIlogicalrelation}, the theorem is then proven. 

    To prove the statement above, let $V=\{\bm{v}_1, \bm{v}_2,\ldots, \bm{v}_q\}$ be the set of weight-$(2^m-4)$ vectors in $\overline{RM}(m-2,m)$. Using Property~(2), it follows that $V \cup \{\bm{1}_n+\bm{u}_i: 1\leq i \leq p \}$ is the set of weight-$(2^m-4)$ vectors in $\overline{RM}(1,m)^\perp$, where $U=\{\bm{u}_1, \bm{u}_2,\ldots, \bm{u}_p\}$ is the set of weight-3 vectors in $\overline{RM}(m-2,m)$. We know that the weights of the codewords of the unshortened $RM(m-2,m)$ are even\footnote{Can be shown using the Macwilliams identity~\cite{macwilliams1977theory} (see \appref{app:A}).}. Therefore, the weight-3 codewords of $\overline{RM}(m-2,m)$ are exactly the shortened versions of the weight-4 codewords of $RM(m-2,m)$ that have 1 as the first entry. So, for each $\bm{u}_i\in U$, we have a distinct weight-4 vector $(1|\bm{u}_i^T)^T\in RM(m-2,m)$. This vector is of the form $(1|\bm{u}_i^T)^T=\bm{1}_{2^m}+\bm{w}$, where $\bm{w}\in RM(m-2,m)$ is a weight-$(2^m-4)$ vector whose first entry is 0. Hence, $\bm{w}^T=(0|\bm{v}_j^T)$ for some $\bm{v}_j\in V$. This implies that $\bm{v}_j=\bm{1}_n+\bm{u}_i$. Therefore, $V \cup \{\bm{1}_n+\bm{u}_i: 1\leq i \leq p \}=V$, thus completing the proof. 
\end{proof}
 
The QRM code achieves the SHS, which outperforms the thin surface code. As we will see, we can achieve a better scaling using Shor codes. 

\subsection{Shor Code and Generalization}

We now consider the $n$-qubit Shor code, which is a simple generalization of the $9$-qubit Shor code~\cite{shor1995scheme}, where $n = 3 n_r$ is divisible by $3$. The code is constructed by concatenating a phase flip code $\left(\frac{\ket{0}\pm\ket{1}}{\sqrt{2}}\right)^{\otimes 3}$ of distance three with an $n_r$-qubit repetition code. (Note that here a concatenation of code 1 and code 2 means encoding every physical qubit of code 1 using code 2.) The logical basis states are given by 
\begin{subequations}
\label{eqn: Shorlogbasis}
\begin{align}
    \ket{0_\tl}=\frac{1}{\sqrt{8}}\left(\ket{0}^{\otimes n_r}+\ket{1}^{\otimes n_r}\right)^{\otimes 3},\\
    \ket{1_\tl}=\frac{1}{\sqrt{8}}\left(\ket{0}^{\otimes n_r}-\ket{1}^{\otimes n_r}\right)^{\otimes 3}.
\end{align}
\end{subequations}
A particularly nice set of stabilizer generators for the code is given in \tabref{tab: Shorstabgen}. It is clear that one logical qubit is encoded into $n=3n_r$ physical qubits.

\begin{table}[tbp]
    \centering
    \includegraphics[width=8cm]{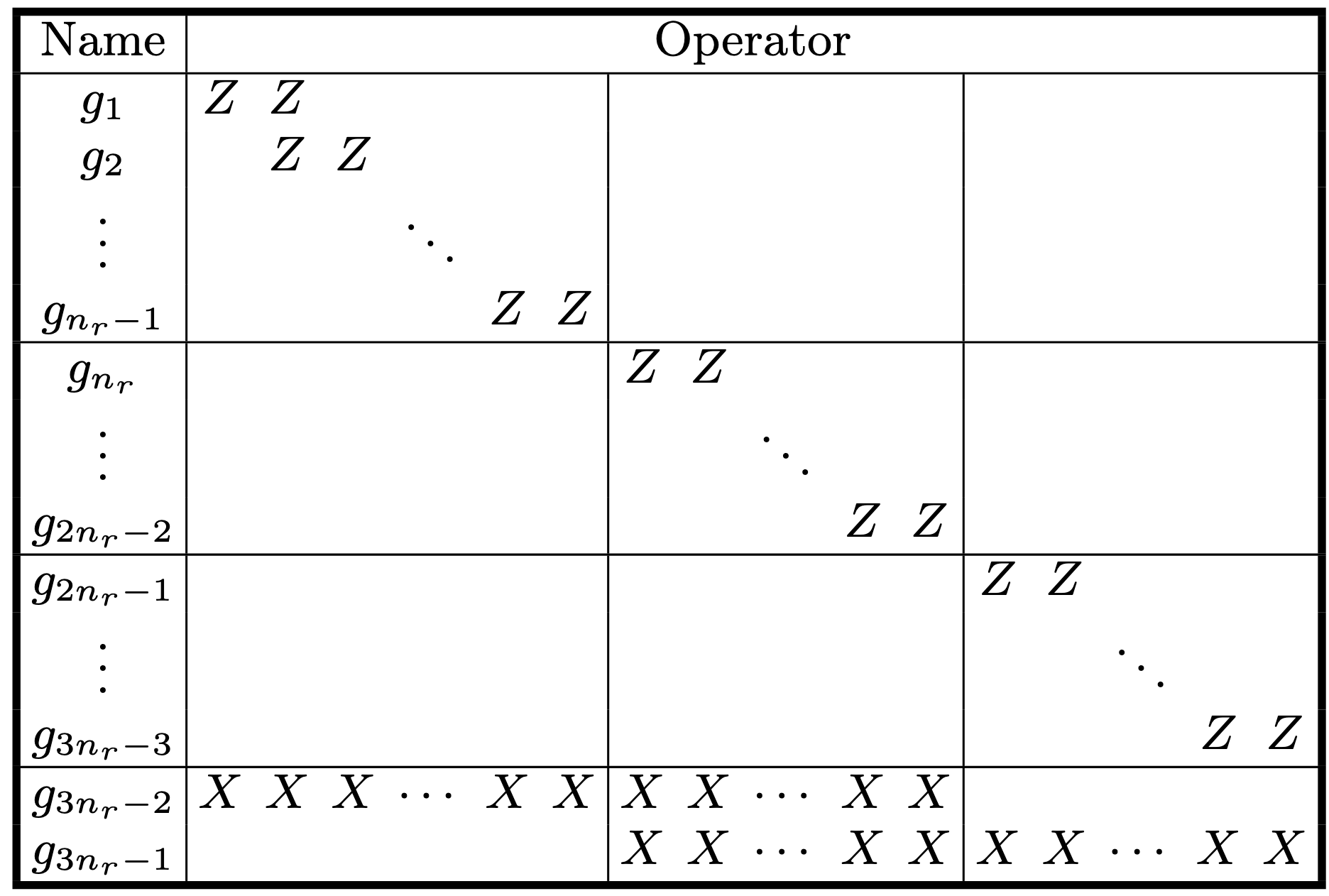}
    \caption{A set of stabilizer generators of the Shor code. The code is divided into three blocks of length $n_r$. Each block supports $(n_r-1)$ $Z$-type stabilizers, and two consecutive $Z$-type stabilizers overlap on one qubit within each block. 
    Two $Z$-type stabilizers from different blocks don't overlap.
    There are two $X$-type stabilizer generators $g_{3n_r-2}$ and $g_{3n_r-1}$, each with support on a string of $2n_r$ qubits. 
    The supports of the two $X$-type generators overlap on the middle block of qubits.
    }
    \label{tab: Shorstabgen}
\end{table}
One set of representatives of the logical $X$ and logical $Z$ operators is 
\begin{subequations}
    \begin{align}
        \overline{X}& := \left(Z\otimes I^{\otimes(n_r-1)} \right)^{\otimes 3}, \label{eqn: subeqn: ShorlogX}\\
        \overline{Z}& := X^{\otimes{n}}. \label{eqn: subeqn: ShorlogZ}
    \end{align}
\end{subequations}
Note that the logical operator $\overline{X}$ in \eqref{eqn: subeqn: ShorlogX} is a term in $G=\sum_{i<j<k}Z_iZ_jZ_k$. Thus, all $Z^{\otimes 3}$-type logical operators must be representatives of the logical $X$ operator. From the definition of the logical basis states in \eqref{eqn: Shorlogbasis}, it is clear that $Z_iZ_jZ_k$ is a logical operator if and only if the three $Z$ operators are supported on different blocks. There are $n_r^3=(n/3)^3$ such logical operators. Using \lemmaref{lemma:QFIlogicalrelation}, we have the following result.

\begin{theorem}
    The $[[n,1,3]]$ Shor code achieves a QFI of ${4n^6 t^2}/{27} = \Theta(n^6 t^2)$.
    \label{thm: shor scaling}
\end{theorem}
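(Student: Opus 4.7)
The plan is to apply \lemmaref{lemma:QFIlogicalrelation} and reduce the theorem to a counting problem. Every stabilizer generator in \tabref{tab: Shorstabgen} carries a positive sign, so \propref{prop} holds, and the lemma gives $F(\rho_\omega(t))=4\ell^2 t^2$, where $\ell$ is the number of triples $i<j<k$ such that $Z_iZ_jZ_k$ is a nontrivial logical operator. Because the canonical representative $\overline{X}=Z_1 Z_{n_r+1}Z_{2n_r+1}$ is itself $Z^{\otimes 3}$-type and any two products of Pauli-$Z$s commute, every $Z^{\otimes 3}$-type logical must be equivalent to $\overline{X}$ modulo stabilizers (an element of the $\overline{Y}$ or $\overline{Z}$ coset would have to anti-commute with $\overline{X}$, which is impossible for two $Z$-type Paulis).

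The main step is then to identify which $Z_iZ_jZ_k$ actually represent $\overline{X}$, which I would do by exploiting the block-cat-state structure of the logical basis in \eqref{eqn: Shorlogbasis}. Writing $\ket{\pm}_b:=\ket{0}^{\otimes n_r}\pm\ket{1}^{\otimes n_r}$ for the cat state on block $b\in\{1,2,3\}$, one has $\ket{0_\tl}\propto\ket{+}_1\ket{+}_2\ket{+}_3$ and $\ket{1_\tl}\propto\ket{-}_1\ket{-}_2\ket{-}_3$. A single Pauli $Z$ inside block $b$ swaps $\ket{+}_b\leftrightarrow\ket{-}_b$ and acts trivially on the other two blocks, so a product of Pauli-$Z$s preserves the code subspace if and only if the number of $Z$s it contains in each block has the same parity. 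Among weight-$3$ such products, only the distribution $(1,1,1)$ (one $Z$ per block) achieves this, and in that case the operator exchanges $\ket{0_\tl}\leftrightarrow\ket{1_\tl}$, realizing $\overline{X}$. The distributions $(2,1,0)$ and $(3,0,0)$ lie outside $\scrN(\scrS)$: each such product shares an odd-size overlap with one of the weight-$2n_r$ $X$-type stabilizer generators $g_{3n_r-2}$ or $g_{3n_r-1}$, hence anticommutes with that generator, and contributes $\Pi Z_iZ_jZ_k\Pi=0$ via the class-$(2)$ argument in the proof of \lemmaref{lemma:QFIlogicalrelation}.

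Counting then gives $\ell=n_r^3=(n/3)^3$, since each factor picks one index from a block of size $n_r$ and the ordering $i<j<k$ is automatic because the three blocks are disjoint intervals. Plugging into \lemmaref{lemma:QFIlogicalrelation} yields the claimed $\Theta(n^6 t^2)$ scaling. The main subtlety I anticipate is the rigorous dismissal of the $(2,1,0)$ and $(3,0,0)$ distributions: the cat-state parity picture is intuitively clear but only shows that such products take logical states out of the code subspace, whereas certifying $\Pi Z_iZ_jZ_k \Pi = 0$ — as opposed to producing a hidden weight-$3$ $\overline{X}$ representative after multiplication by stabilizers — is most cleanly handled by the explicit commutation check against $g_{3n_r-2}$ and $g_{3n_r-1}$ indicated above.
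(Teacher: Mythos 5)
Your proposal is correct and follows essentially the same approach as the paper: identify $\overline{X}=Z_1 Z_{n_r+1}Z_{2n_r+1}$ as a $Z^{\otimes 3}$-type logical, argue by commutativity that all $Z^{\otimes 3}$-type logicals are representatives of $\overline{X}$, characterize these as exactly the ``one $Z$ per block'' operators using \eqref{eqn: Shorlogbasis}, count $\ell=n_r^3$, and apply \lemmaref{lemma:QFIlogicalrelation}. Your explicit check that the $(2,1,0)$ and $(3,0,0)$ distributions anticommute with $g_{3n_r-2}$ or $g_{3n_r-1}$ is a welcome refinement of the paper's ``it is clear'' step, though note that the cat-state parity argument already directly certifies $\Pi Z_iZ_jZ_k\Pi=0$ since it shows the operator maps the code subspace into its orthogonal complement, so your stated worry about a ``hidden weight-3 representative'' is not actually an issue.
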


We also note that the above discussion on the Shor code can be generalized to estimating $Z^{\otimes k}$ terms in Hamiltonian for any constant $k \geq 3$. For example, when we choose $n_r := n/k$ (assuming $n$ is divisible by $k$) and assume $n$ is sufficiently large, then the $[[n,1,k]]$ code  
\begin{subequations}
\begin{align}
    \ket{0_\tl}=\frac{1}{\sqrt{2^k}}\left(\ket{0}^{\otimes n_r}+\ket{1}^{\otimes n_r}\right)^{\otimes k},\\
    \ket{1_\tl}=\frac{1}{\sqrt{2^k}}\left(\ket{0}^{\otimes n_r}-\ket{1}^{\otimes n_r}\right)^{\otimes k}, 
\end{align}
\end{subequations}
has $\Theta(n^k)$ $Z^{\otimes k}$-type logical operators, implying a QFI of $\Theta(n^{2k}t^2)$, which achieves the optimal SHS scaling, the same as the noiseless case.

Inspired by the repetition code substructure in the Shor code, we have a more general statement: 
\begin{theorem}
\label{thm: rep}
    Let an $[[n,1]]$ code $C$ be the concatenation of a $[[q,1]]$ stabilizer code $C'$ that has at least one $Z^{\otimes 3}$-type logical operator and can correct single-qubit Pauli-Z errors and the $[[n/q,1]]$ repetition code, where $q$ is a constant integer and $n$ is divisible by $q$. Then $C$ achieves a QFI of $\Theta(n^6 t^2)$. 
\end{theorem}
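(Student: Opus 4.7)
The plan is to reduce the claim to \lemmaref{lemma:QFIlogicalrelation}: if $\ell$ denotes the number of $Z^{\otimes 3}$-type logical operators of $C$, it suffices to establish $\ell = \Theta(n^3)$, since then \lemmaref{lemma:QFIlogicalrelation} together with \eqref{eq:opt-coeff} (which caps $F(\rho_\omega(t))$ at $O(n^6 t^2)$ for any ancilla-assisted QEC code) yield $F(\rho_\omega(t)) = \Theta(n^6 t^2)$. As usual I would assume \propref{prop} WLOG, so that the inequality in \lemmaref{lemma:QFIlogicalrelation} is saturated and only a lower bound on $\ell$ remains to be shown.

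To produce $\Omega(n^3)$ such logical operators, I would unpack the concatenation. The $n$ physical qubits of $C$ partition into $q$ blocks of size $n_r = n/q$, each hosting an $[[n_r,1]]$ repetition code with logical basis $\ket{0}^{\otimes n_r}, \ket{1}^{\otimes n_r}$ and stabilizer generators $Z_a Z_{a+1}$. The crucial structural point is that a single $Z$ on \emph{any} site of a block is a valid representative of the inner logical Pauli-$Z$. Consequently, a physical Pauli-$Z$ acting on qubit $\alpha$ of the outer code $C'$ lifts to a single-qubit $Z$ on an arbitrary site inside the $\alpha$-th block of $C$. Let $Z_\alpha Z_\beta Z_\gamma$ (with distinct $\alpha,\beta,\gamma \in \{1,\dots,q\}$) be the $Z^{\otimes 3}$-type logical operator of $C'$ guaranteed by hypothesis. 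Then for each choice of sites $a$ in block $\alpha$, $b$ in block $\beta$, and $c$ in block $\gamma$, the physical operator $Z_a Z_b Z_c$ implements the outer logical $Z_\alpha Z_\beta Z_\gamma$ on the encoded qubit and is therefore itself a logical operator of $C$. Since the three chosen qubits lie in disjoint blocks, different triples yield distinct operators, so $\ell \geq n_r^3 = (n/q)^3 = \Theta(n^3)$ as $n \to \infty$ with $q$ constant.

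The main thing to verify carefully --- not a deep obstacle, but the one easy-to-miss point --- is that these lifted operators $Z_a Z_b Z_c$ really lie in $\scrN(\scrS) \setminus \scrS$ rather than in $\scrS$. They do, because any element of $\scrS$ acts trivially on the encoded qubit, whereas $Z_a Z_b Z_c$ acts as the non-trivial outer logical $Z_\alpha Z_\beta Z_\gamma \notin \scrS'$. A companion remark worth including is that the hypothesis ``$C'$ corrects single-qubit Pauli-$Z$ errors,'' combined with the inner repetition code's correction of single-qubit $X$ errors via its $Z_a Z_{a+1}$ syndromes, ensures the concatenated code $C$ satisfies $\Pi S \Pi \propto \Pi$ for every $S \in \mS$, so that the metrological framework of \secref{sec: formulation of the problem} indeed applies and \lemmaref{lemma:QFIlogicalrelation} can be invoked as claimed.
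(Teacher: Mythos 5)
Your proposal is correct and takes essentially the same route as the paper: both prove $\ell \geq n_r^3 = \Theta(n^3)$ by lifting one $Z^{\otimes 3}$-type logical operator of $C'$ to all $n_r^3$ physical triples drawn one from each of the three corresponding blocks, then invoke \lemmaref{lemma:QFIlogicalrelation} (the only cosmetic difference being that you cap the QFI via \eqref{eq:opt-coeff}, while the paper uses the trivial bound $\ell \leq \binom{n}{3}$). The one thing you should promote from a ``companion remark'' to a genuine proof step is the verification that $C$ is an $[[n,1,3]]$ code correcting all single-qubit errors: the paper devotes roughly half its proof to this (showing $X$-type errors are caught by the in-block $Z_iZ_k$ stabilizers inherited from the repetition layers, and $Z$-type errors descend to weight-$\leq 2$ $Z$ errors on $C'$, which $C'$ handles by hypothesis), since without it neither $\Pi S\Pi\propto\Pi$ for $S\in\mS$ nor the applicability of \lemmaref{lemma:QFIlogicalrelation} is established.
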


\begin{proof}
We need to prove (1)~$C$ has $\Theta(n^3)$ $Z^{\otimes 3}$-type logical operators and (2)~$C$ is an $[[n,1,3]]$ stabilizer code. 

Suppose that $Z_1 Z_2 Z_3$ is a logical operator for $C'$, it then implies $Z_i Z_j Z_k$ are logical operators for $C$ whenever $1 \leq i \leq n_r$, $n_r+1\leq j \leq 2n_r$, $2n_r+1 \leq k \leq 3n_r$ where $n_r := n/q$. Note that in $C$, qubits with labels from $q_1 n_r + 1$ to $(q_1 + 1)n_r$ correspond to the $q_1$-th qubit in $C'$ expanded from the repetition code concatenation for any $q_1$. The number of different $Z_iZ_jZ_k$ above are $n_r^3 = \Theta(n^3)$, proving statement (1). 

To prove statement (2), we need to show $C$ can correct any single-qubit error, i.e., any operator $S_iS_j$ for $S = I,X,Z$ must either belong to the stabilizer group of $C$ or anti-commute with one stabilizer of $C$. First, note that for any $i$, $X_i$ must anti-commute with $Z_iZ_k$ for any $k \neq i$ and $(\lceil i/n_r\rceil - 1) n_r + 1 \leq  k \leq \lceil i/n_r\rceil n_r $ which is a stabilizer of $C$. It implies any $S_iS_j$ anti-commutes with some stabilizer of $C$ whenever one of them is a Pauli-X operator. On the other hand, any $n$-qubit operator $S_iS_j$ with $S = I,Z$ can be mapped to a $q$-qubit operator $S_{\lceil i/n_r \rceil}S_{\lceil j/n_r \rceil}$ acting on $C'$ which either belongs to the stabilizer group of $C'$ or anti-commutes with one stabilizer of $C'$ because $C'$ corrects single-qubit Pauli-Z errors. It implies the required condition is also satisfied for $S_iS_j$ with $S = I,Z$. Statement (2) is then proven. 
\end{proof}

\thmref{thm: shor scaling} is a special case of \thmref{thm: rep} where $C'$ is the three-qubit phase-flip code. 
As discussed in \secref{sec: formulation of the problem}, the above codes provide the optimal SHS scaling we can achieve for estimating the $ZZZ$ interaction under single-qubit noise (in the setting of \eqref{eq:master} and \eqref{eq:hamt}). Note that \thmref{thm: rep} can also be straightforwardly generalized to estimating $Z^{\otimes k}$ terms for $k > 3$ if we replace $Z^{\otimes 3}$ in the statement of \thmref{thm: rep} by $Z^{\otimes k}$ and $\Theta(n^6 t^2)$ by $\Theta(n^{2k}t^2)$.

\section{Necessary Conditions for Achieving the SHS}\label{sec: no go}

In the previous section, we discussed three examples of quantum codes that asymptotically achieve $(nt)^{-1}$, $(n^2t)^{-1}$, and $(n^3t)^{-1}$ precision scalings for the $ZZZ$ interaction estimation problem. The three families of codes have special stabilizer structures: the thin surface code has a set of low-weight stabilizer generators; the QRM code lacks a complete set of low-weight stabilizer generators; and the Shor code has local low-weight $Z$-type generators even though the $X$-type generators scaled linearly with the length of the code. In this section, we address the question of what properties a stabilizer code must have to achieve different precision scalings with respect to $n$, or equivalently, different scalings of the number of $Z^{\otimes 3}$-type logical operators with respect to $n$.

\subsection{Unachievability of the SHS}\label{subsec: lin scaling}

We proved in \thmref{thm: thin planar scaling} that the thin surface code can be used to achieve $1/n$ precision scaling in $\omega$. In fact, this scaling can more generally apply to stabilizer codes with constant weight generators that can correct arbitrary single qubit errors. We want to find an upper bound on the number of $Z^{\otimes 3}$-type logical operators for such codes, which will consequently give us the best possible precision scaling.

\begin{theorem}
    Let $C$ be an $[[n,1,3]]$ stabilizer code such that there exists a set of stabilizer generators with weights at most $w$.  
    Then, $\ell$, the number of $Z^{\otimes 3}$-type logical operators of $C$ is at most $2w(w+1)n/3$. In particular, if $w$ is a constant independent of $n$, then $\ell = O(n)$. 
    \label{thm: LDPC}
\end{theorem}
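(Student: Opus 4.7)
The plan is to establish the per-qubit bound $\ell_i \le 2w(w+1)$ for every qubit $i\in[n]$, where $\ell_i$ counts the $Z^{\otimes 3}$-type logical operators of $G$ containing $i$; since $\sum_i \ell_i = 3\ell$, this will yield $\ell \le 2w(w+1)n/3$. The first step is to set up the combinatorial structure using the distance-$3$ hypothesis. For each qubit $q$, the weight-$1$ operator $Z_q$ either lies in $\scrS$ (in which case $q$ is trivial and may be deleted without affecting $\ell$) or anti-commutes with some stabilizer generator, in which case I fix such a generator $g_q$ of weight $\le w$ and denote by $A_q$ its $X/Y$-support, so $q\in A_q$ and $|A_q|\le w$. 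The key observation, already used in the proof of \lemmaref{lemma:QFIlogicalrelation}, is that for any logical triple $T=\{i,j,k\}$ and any generator $g$ with $q\in A_g$ for some $q\in T$, commutation $[Z_iZ_jZ_k,g]=0$ together with $q\in A_g\cap T$ forces $|A_g\cap T|=2$: such a $g$ must act with $X/Y$ on exactly two of the three qubits of $T$.

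I would then fix $i$ and analyze the logical triples $T\ni i$. Every such $T$ has a unique partner $\alpha(T)\in A_i\setminus\{i\}$, so there are at most $w-1$ options for $\alpha$; the third vertex $\beta$ lies outside $A_i$. Applying the partner observation at $\alpha$ using $g_\alpha$ splits the analysis into two subcases: (a) $\beta\in A_\alpha\setminus\{\alpha\}$, which provides at most $w-1$ choices for $\beta$ per $\alpha$ and hence at most $(w-1)^2$ triples; or (b) $i\in A_\alpha$, in which case $\beta$ is unconstrained by $g_\alpha$ alone. To handle subcase (b) I would invoke the equivalence relation $q\sim q'$ iff $q\in A_g\Leftrightarrow q'\in A_g$ for every generator $g$; by distance $3$ applied to the weight-$2$ operator $Z_qZ_{q'}$, this is equivalent to $Z_qZ_{q'}\in\scrS$, so the three qubits of any logical triple must lie in three distinct classes. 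Writing the incidence vectors $\phi(q)=(\mathbf{1}[q\in A_g])_g$, the parity condition for $T$ reads $\phi(i)+\phi(\alpha)+\phi(\beta)=0$, which pins down the $\phi$-class of $\beta$ uniquely given $(i,\alpha)$. A count over the generators with $\alpha\in A_g$ but $i\notin A_g$, combined with the weight bound $\le w$, would then bound the contribution of subcase (b), yielding $\ell_i\le 2w(w+1)$.

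The main obstacle I anticipate is quantifying subcase (b) tightly: cleanly bounding the number of qubits that can share a single $\phi$-class, when all generators have weight $\le w$ and the code has distance $3$, is what produces the sharp constant $2w(w+1)$ rather than a looser $O(w^2 n)$ estimate that would follow from the naive argument with a single $g_i$. Once the per-vertex bound is in hand, the theorem follows immediately by summing $\sum_i\ell_i=3\ell$, and the ``in particular'' clause is just the observation that $w$ constant makes the right-hand side linear in $n$.
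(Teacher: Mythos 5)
Your plan shares the paper's central observation and overall skeleton---any stabilizer generator meets a $Z^{\otimes 3}$-type logical operator in $0$ or $2$ positions of its $X/Y$-support; delete qubits with $Z_q \in \scrS$; bound the per-qubit degree and use $\sum_i \ell_i = 3\ell$---but the internal argument is organized differently. The paper fixes the single maximum-degree qubit, builds a graph on the rest with edges given by logical triples through that qubit, and then argues either a high-order vertex forces a generator of weight $> w$ (Case~1) or else a vertex-cover/maximum-matching bound applies (Case~2). You instead work directly with a chosen anti-commuting generator $g_i$ per qubit $i$, identify the partner $\alpha \in A_i$ for each triple, and sort by the $\phi$-equivalence classes of incidence vectors. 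Your route is more elementary (no matching theory) and, once completed, actually gives the slightly sharper per-qubit bound $\ell_i \le (w-1)w$.

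However, subcase~(b) does have a genuine gap, and your proposed fix does not quite close it: the generators with $\alpha \in A_g$ but $i \notin A_g$ may simply not exist (nothing about distance~$3$ rules out $\phi(\alpha) \le \phi(i)$ componentwise), in which case that count bounds nothing. The missing piece is exactly the paper's Case~1 in disguise, and it is short. For fixed $(i,\alpha)$, all candidate third qubits $\beta$ share one $\phi$-class, since the product of two such $Z^{\otimes 3}$-type logical operators is a $Z^{\otimes 2}$-type stabilizer. Pick any one $\beta_1$; your generator $g_{\beta_1}$ exists because $Z_{\beta_1} \notin \scrS$ and, by distance~$3$, $Z_{\beta_1} \notin \scrN(\scrS)$ either. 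Then $g_{\beta_1}$ commutes with each stabilizer $Z_{\beta_1} Z_{\beta_j}$, forcing $\beta_j \in A_{\beta_1}$ for every $j$, so the class has at most $|A_{\beta_1}| \le w$ members. This gives $\ell_i \le (w-1)w \le 2w(w+1)$ per qubit, and summing proves the theorem.
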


\begin{proof}
Let $\sfS$ be a stabilizer generator set of $C$ such that the weight of every $g \in \sfS$ satisfies $\wt(g) \leq w$. Without loss of generality, we assume there are no $Z_i$-type stabilizers. Otherwise, we can always remove from the entire set of $n$ qubits the set of qubits on which a single Pauli-Z are stabilizers, i.e., qubits that are always in states $\ket{0}$. Then the remaining $\tilde{n}$ qubits give an $[[\tilde{n},1,3]]$ stabilizer code $\tilde{C}$ satisfying the same weight constraint without $Z_i$-type stabilizers. Any $Z^{\otimes 3}$-type logical operator in $C$ must be fully supported on the remaining $\tilde{n}$ qubits and is thus also a $Z^{\otimes 3}$-type logical operator in $\tilde{C}$. Then by proving the theorem for $\tilde{C}$, the theorem for $C$ also holds. 

We define the \textit{degree} of a qubit as the number of $Z^{\otimes 3}$-type logical operators that are supported on it. Without loss of generality, assume that the first qubit (i.e., the qubit labeled by $1$) has the highest degree. Define a graph $\frakG$ with $(n-1)$ vertices corresponding to the qubits labeled by $2,3,\ldots, n$. There is an edge between the $i$-th and $j$-th vertices if $Z_1Z_iZ_j$ is a logical operator of the code. We define the \textit{order} of a vertex as the number of edges connected to it.

Consider the two separate cases: (1) There is some vertex with order $>w$, and (2) The order of every vertex is $\leq w$.

\textbf{Case 1:} Suppose that the order of vertex $i_1$ is greater than $w$. Then, there are vertices $i_2, i_3, \ldots, i_{w+2}$ that are connected to vertex $i_1$, as shown in \figref{fig: ldpc graph}. The product of any two $Z^{\otimes 3}$-type logical operators is a stabilizer because they must be representatives of a same logical operator. It then follows that  $Z_{i_j}Z_{i_{j+1}}$ is a stabilizer for all $2\leq j \leq w+1$. Now, since $Z_{i_2}$ is not a stabilizer and must be an error operator that anti-commutes with some stabilizer generator, there is some stabilizer generator $g$ with $X$ or $Y$ acting on the qubit labeled by $i_2$. Because $g$ commutes with the stabilizer $Z_{i_2}Z_{i_3}$, it follows that $g$ has $X$ or $Y$ support on the qubit labeled by $i_3$ as well. 
\begin{figure}[tbp]
    \centering
    \includegraphics[width=0.8\linewidth]{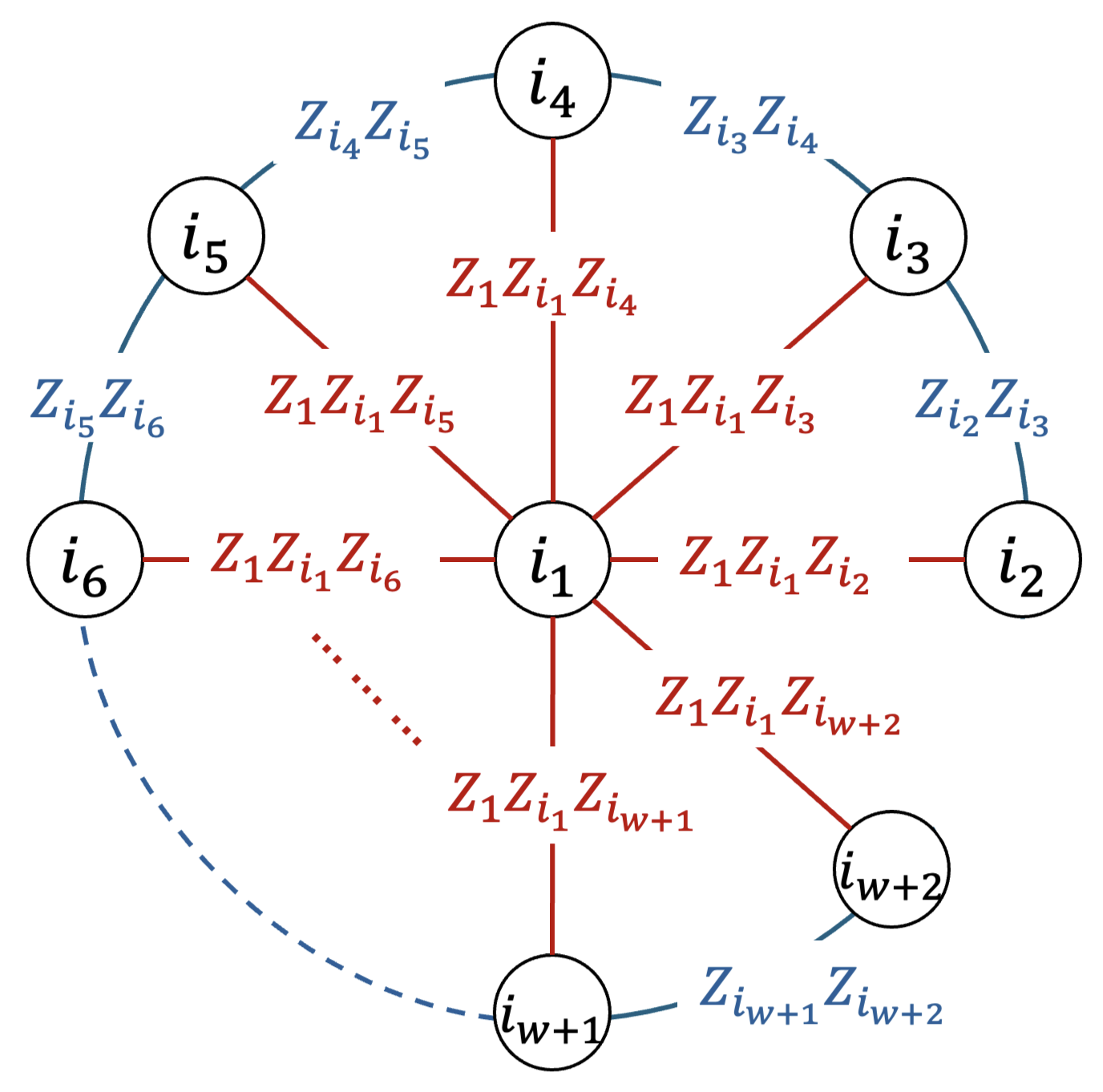}
    \caption{A subgraph of $\frakG$ where vertex $i_1$ has an order greater than $w$. Here, the red edges correspond to the $Z^{\otimes 3}$-type logical operators. Since the product of two such logical operators is a stabilizer, it follows that $Z_{i_j}Z_{i_{j+1}}$ is a stabilizer for all $2\leq j \leq w+1$. We graphically denote these stabilizers by blue lines (note that these are not edges in $\frakG$).}
    \label{fig: ldpc graph} 
\end{figure}
Similarly, we argue that $g$ must have $X$ or $Y$ support on the qubits labeled by $i_4, i_5, \ldots, i_{w+2}$. Then, $\wt(g)>w$, which is a contradiction. Hence, this case is not possible for $C$. 

\textbf{Case 2:} Suppose that the order of each vertex is $\leq w$. Let $d_1$ be the number of logical $Z^{\otimes 3}$ operators with support on the first qubit, i.e., the number of edges in $\frakG$. First, we will prove that for any subset of vertices $V$ that is a minimum \textit{vertex cover}\footnote{A vertex cover is a subset of the vertices such that at least one endpoint of each edge is in the subset. A minimum vertex cover is such a subset with the minimum number of vertices.} of $\frakG$, the size of $V$ satisfies $|V|\geq \lfloor \frac{d_1}{2w}\rfloor$.

The \textit{matching} of a graph is defined as a subset of its edges such that no two edges share a vertex. Let $M$ be a maximum matching of $\frakG$. Due to the disjointness of the edges in $M$, each vertex in $V$ covers at most one edge of the maximum matching. Therefore, $|V|\geq |M|$. So, it is sufficient to prove that $|M|\geq \lfloor \frac{d_1}{2w} \rfloor$. 
Note that for $w=1$, the edges are disjoint; hence, $|M|=d_1\geq \lfloor \frac{d_1}{2w} \rfloor$. Now, suppose $w>1$. To generate an arbitrary matching, we consider the set of all edges. First, we pick a random edge from it and remove all the edges that share endpoints with it from the set. Since the highest order of the vertices is $\leq w$, the number of edges removed is at most $2(w-1)$. We choose another random edge from the set and continue this process until the set is a matching. In the end, we will obtain a matching of size $\geq \lfloor \frac{d_1}{2(w-1) + 1} \rfloor \geq \lfloor \frac{d_1}{2w} \rfloor$. Hence, $|M|\geq \lfloor \frac{d_1}{2w} \rfloor$, which, in turn, implies that $|V| \geq \lfloor \frac{d_1}{2w} \rfloor$ for all $w$.

Now, the code must have a stabilizer generator $g$ that is $X$ or $Y$ on the first qubit because each $Z_i$ must anti-commute with some stabilizer generator. Suppose there is an edge between vertices labeled by $i$ and $j$, and so $Z_1Z_iZ_j$ is a logical operator. Since $g$ must commute with $Z_1Z_iZ_j$, it follows that $g$ has $X$ or $Y$ support on either the $i$-th or $j$-th qubit. Therefore, the vertices corresponding to the qubits that support $X$ or $Y$ of $g$ form a vertex cover of $\frakG$. Hence, $\wt(g)\geq \lfloor \frac{d_1}{2w} \rfloor$. However, by the constant weight assumption, $w \geq \wt(g)$, which implies that $\lfloor \frac{d_1}{2w} \rfloor \leq w$. So, $d_1 \leq 2 w(w+1)$. Recall that $d_1$ is the number of logical $Z^{\otimes 3}$ operators with support on the first qubit (qubit with the highest degree). Hence, the total number of $Z^{\otimes 3}$-type logical operators satisfies $\ell \leq d_1 n/3\leq 2w(w+1)n/3$. The first inequality here is from \lemmaref{lemma: d_1lowerbd} that we will prove below separately.
\end{proof}

Here we prove a lemma where we establish a simple relation between the number of $Z^{\otimes 3}$-type logical operators and the degrees of qubits, which was used in the proof above and will also be used in later sections. 
\begin{lemma}
    Suppose the number of $Z^{\otimes 3}$-type logical operators in an $n$-qubit stabilizer code is $\ell$. If $d_i$ is the degree of the $i$-th qubit of the code, then 
    \begin{align}
        \max\{d_i\}_{i=1}^n \geq \frac{3\ell}{n}.
        \label{eqn: d_1lowerbd}
    \end{align}
    \label{lemma: d_1lowerbd}
\end{lemma}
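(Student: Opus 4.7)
The plan is to use a simple double-counting argument followed by an averaging bound. Specifically, I would count incidences between qubits and the $Z^{\otimes 3}$-type logical operators in two different ways.

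First, I would observe that each $Z^{\otimes 3}$-type logical operator in $G$ is, by definition, of the form $Z_i Z_j Z_k$ with three distinct indices, hence is supported on exactly three qubits. Summing the degrees over all qubits therefore counts each logical operator exactly three times:
\begin{equation}
\sum_{i=1}^n d_i = 3\ell.
\end{equation}
This identity is the core of the argument, since it directly relates the local quantity (degrees of qubits) to the global quantity (number of logical operators).

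Next, I would invoke the standard averaging inequality: the maximum of a finite collection of non-negative numbers is at least their arithmetic mean. Applied to the degrees,
\begin{equation}
\max_{1 \leq i \leq n} d_i \;\geq\; \frac{1}{n} \sum_{i=1}^n d_i \;=\; \frac{3\ell}{n},
\end{equation}
which is exactly the claim of \eqref{eqn: d_1lowerbd}.

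There is essentially no obstacle here; the lemma is a one-line pigeonhole bound once the double-counting identity is written down. The only point to be careful about is the convention that each $Z^{\otimes 3}$-type logical operator acts on three distinct qubits (so the factor of $3$ is exact), which is guaranteed by the form of $G = \sum_{i<j<k} Z_i Z_j Z_k$ in \eqref{eq:hamt}.
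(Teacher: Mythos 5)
Your proof is correct and essentially identical to the paper's: both use the double-counting identity $\sum_{i=1}^n d_i = 3\ell$ (each weight-three operator contributes to exactly three degrees) followed by the observation that the maximum degree is at least the average. The paper phrases the averaging step as $\sum_i d_i \leq n\, d_1$ with $d_1$ taken WLOG to be maximal, which is the same inequality you invoke.
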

\begin{proof}
Recall that the degree of a qubit is defined to be the number of $Z^{\otimes 3}$-type logical operators that are supported on it. 
Without loss of generality, assume that the first qubit has the highest degree, i.e., $d_1=\max\{d_i\}_{i=1}^n$. The sum of the weights of all  $Z^{\otimes 3}$-type logical operators is $3\ell$. This must be equal to the sum of the degrees of the qubits. So, $\sum_{i=1}^n d_i=3 \ell $. But $\sum_{i=1}^n d_i \leq n d_1$. Hence, it follows that $nd_1\geq 3\ell$, and so $\max\{d_i\}_{i=1}^n\geq \frac{3\ell}{n}$.
\end{proof}

From \thmref{thm: LDPC} and \lemmaref{lemma:QFIlogicalrelation}, the best possible scaling of the estimation precision for the $ZZZ$ interaction using a stabilizer code with constant-weight stabilizer generators is the HS of $n^{-1}$. This family of stabilizer codes include quantum LDPC codes which are an important class of stabilizer codes as a candidate to achieve fault tolerance~\cite{breuckmann2021quantum}. The thin surface code is one class of quantum LDPC codes, and by \thmref{thm: thin planar scaling}, it obtains the optimal scaling. 

\thmref{thm: LDPC} also applies to cases where $w$ increases with respect to $n$. For example, one corollary of \thmref{thm: LDPC} is when $w = o(n)$, the optimal SHS $n^{-3}$ cannot be achieved. Note that it is also a corollary of \thmref{thm: n3scaling} that we will present later in \secref{subsec: cubic scaling}. 

\subsection{Unachievability of the Optimal SHS}\label{subsec: quadratic scaling}

We now relax the constant-weight assumption on the stabilizer generators, in which case we want to establish a different no-go result that precludes a stabilizer code from achieving a precision scaling better than $1/n^2$. 

A notable feature of the Shor code is its $Z^{\otimes 2}$-type stabilizer generators, as shown in \tabref{tab: Shorstabgen}. The following theorem demonstrates that the existence of such stabilizers is necessary for achieving the optimal SHS of $n^{-3}$.

\begin{theorem}
    Let $C$ be an $[[n,1]]$ stabilizer code. If $C$ does not have a $Z^{\otimes 2}$-type stabilizer, then the number of $Z^{\otimes 3}$-type logical operators is $\ell \leq {n(n-1)}/{6} = O(n^2)$.
    \label{thm: ZZ existence}
\end{theorem}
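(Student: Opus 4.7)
The plan is to exploit a counting argument: encode each $Z^{\otimes 3}$-type logical operator as a triple (a 3-element subset of $\{1,\ldots,n\}$), and show that the absence of $Z^{\otimes 2}$-type stabilizers forces these triples to behave like the blocks of a partial Steiner triple system.

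First I would use \lemmaref{lemma:QFIlogicalrelation} together with \propref{prop} (which we may assume without loss of generality) to recall that any two $Z^{\otimes 3}$-type logical operators $L_1 = Z_i Z_j Z_k$ and $L_2 = Z_a Z_b Z_c$ must represent the same logical Pauli with the same sign, so that the product $L_1 L_2$ is a $Z$-type stabilizer of $C$. Next, I would perform a case analysis on the size of the overlap $\{i,j,k\} \cap \{a,b,c\}$. If the triples coincide the two logical operators are equal, and if they share exactly two qubits their product simplifies to a weight-2 stabilizer of the form $Z_u Z_v$, contradicting the hypothesis that $C$ has no $Z^{\otimes 2}$-type stabilizer. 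Hence distinct $Z^{\otimes 3}$-type logical operators correspond to triples sharing at most one qubit.

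From there the conclusion follows by double counting the unordered pairs of qubits. Each of the $\ell$ triples contains exactly $\binom{3}{2} = 3$ distinct pairs of qubits, and because any two triples overlap in at most one qubit, no pair of qubits is used by more than one triple. Hence the $\ell$ triples contribute $3\ell$ distinct pairs, which cannot exceed the total number $\binom{n}{2} = n(n-1)/2$ of available pairs, yielding the desired bound $\ell \leq n(n-1)/6$.

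I do not anticipate a major obstacle beyond being careful with the two potential subtleties in the first step: (i) the sign issue in forming the product $L_1 L_2$, which is resolved by invoking \propref{prop} so that all $Z$-type stabilizers and all relevant $Z^{\otimes 3}$-type logical representatives carry consistent signs; and (ii) ensuring that the intersection of two three-element sets of qubits is handled cleanly in each cardinality. Both are essentially bookkeeping, so the argument reduces to a clean combinatorial bound once the stabilizer-theoretic setup is translated into the triple-intersection language.
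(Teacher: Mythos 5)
Your proof is correct, and it takes a genuinely different combinatorial route from the paper. The paper restricts attention to the triples through the qubit of maximum degree $d_1$: since any two such triples may overlap only at that qubit (else their product is a weight-$2$ $Z$-stabilizer), they cover $2d_1$ distinct other qubits, giving $d_1 \leq (n-1)/2$, which combined with the separately-proved degree bound $d_1 \geq 3\ell/n$ (\lemmaref{lemma: d_1lowerbd}) yields $\ell \leq n(n-1)/6$. You instead observe the pairwise-overlap-$\leq 1$ property for \emph{all} triples globally and perform a single double count on pairs of qubits: distinct triples contribute disjoint sets of $\binom{3}{2}=3$ pairs, so $3\ell \leq \binom{n}{2}$. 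This is the classic Fisher/partial-Steiner-triple-system argument, and it is self-contained (no appeal to the degree lemma). It also aligns cleanly with the Ray-Chaudhuri--Wilson generalization the paper invokes later for $Z^{\otimes k}$: your bound is precisely the $k=3$, $k_0=1$ case $\abs{\scrF}\leq\binom{n}{k-k_0}$. The paper's longer route buys consistency with the machinery (degree lemma, graph on a chosen qubit) that it reuses for \thmref{thm: n3scaling}, but for \thmref{thm: ZZ existence} in isolation your argument is shorter. Your treatment of the sign issue via \propref{prop} is the correct resolution; just to make it airtight one should note, as the paper does, that all $Z^{\otimes 3}$-type logical operators pairwise commute and are nontrivial logicals of a $[[n,1]]$ code, hence represent the same logical Pauli up to sign, and \propref{prop} eliminates the sign ambiguity so that the product lies in the stabilizer group rather than being $-1$ times a stabilizer.
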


\begin{proof}
    Let $\mathcal{A}=\{A_1,A_2,\ldots A_\ell\}$ be the set of $Z^{\otimes 3}$-type logical operators. Let $d_i$ be the degree of the $i$-th qubit. Without loss of generality, we make the following three assumptions: (1) $d_1=\max\{d_i\}_{i=1}^n$, (2) $\mathcal{A}'=\{A_i\}_{i=1}^{d_1}$ is the set of operators with support on the first qubit, and (3) The first $n'$ qubits is the minimum set of qubits that contains the supports of operators in $\mathcal{A}'$. All three assumptions can be satisfied by relabeling the qubits. The operators in $\mathcal{A}$ are different representatives of the same logical operator, and so the product of any two $A_i,A_j\in \mathcal{A}'$ is a stabilizer. If there is no $Z^{\otimes 2}$-type stabilizer, then each pair of operators in $\mathcal{A}'$ only overlap on the first qubit. Hence, each operator in $\mathcal{A}'$ is supported on two distinct qubits other than the first qubit. Therefore, $n'=2d_1+1$. However, since $n'\leq n$, we have $2d_1+1\leq n$. Combining this inequality with \eqref{eqn: d_1lowerbd} in \lemmaref{lemma: d_1lowerbd}, we get 
    \begin{align}
        \frac{3\ell}{n}\leq d_1 \leq \frac{n-1}{2}.
        \label{eqn: d_1ineq}
    \end{align}
    Therefore, $\ell\leq \frac{n(n-1)}{6}$, and so if there are no $Z^{\otimes 2}$-type stabilizers, then $\ell=O(n^2)$. 
\end{proof}

The scaling of QRM codes derived in \thmref{thm: qrm} agrees with this result. One can show using the enumerator polynomial in \eqref{eqn: enumshortRM(1,m)perp} of \appref{app:A} that QRM codes do not have any $Z^{\otimes 2}$-type stabilizer. Therefore, by \thmref{thm: ZZ existence} and \thmref{thm: qrm}, the QRM codes achieve the optimal precision scaling $1/n^2$ among all stabilizer codes without $Z^{\otimes 2}$-type stabilizers.

Finally, we discuss generalization to the estimation of higher-order interactions $Z^{\otimes k}$ for $k \geq 3$. We claim that for any $1 \leq k_0 < k-1$, for stabilizer codes without $Z^{\otimes i}$-type stabilizers for all $i \leq 2k_0$, the number of $Z^{\otimes k}$-logical operators is at most $O(n^{k-k_0})$ which is a factor of $n^{k_0}$ away from the optimal scaling. It implies $ZZ$-type stabilizers are necessary to achieve the optimal scaling $O(n^{k})$. \thmref{thm: ZZ existence} represents the special case where $k=3$ and $k_0=1$. 

To see why the generalized statement is true, we use the generalized Ray-Chaudhuri--Wilson theorem~\cite{snevily1995generalization} which was a celebrated intersection theorem in combinatorics. Consider $\scrF$, a family of subsets of $\{1,2,\ldots,n\}$. If $\abs{F} = k$ for any $F \in \scrF$ and $\abs{E\cap F} < k - k_0$ for any distinct subsets $E,F \in \scrF$. Then the generalized Ray-Chaudhuri--Wilson theorem states that $\abs{\scrF} \leq \binom{n}{k-k_0} = \Theta(n^{k-k_0})$. In our setting, we pick $\scrF$ to be the family of subsets of qubits $\{1,2,\ldots,n\}$ such that $F \in \scrF$ means $F$ is the support of a $Z^{\otimes k}$-type logical operator. When $\abs{E\cap F} = i$, it implies the existence of a $Z^{\otimes 2(k-i)}$-type stabilizer which is equal to two $Z^{\otimes k}$-type logical operators supported on $E$ and $F$. The non-existence of $Z^{\otimes j}$-type stabilizers with $j \leq 2k_0$ implies that $2(k-\abs{E\cap F}) > 2k_0$ and that $\abs{E\cap F} < k - k_0$. Then we have an upper bound of $\Theta(n^{k-k_0})$ on $\abs{\scrF}$, i.e., the number of $Z^{\otimes k}$-type logical operators.

\subsection{Necessary Condition for Achieving the Optimal SHS}\label{subsec: cubic scaling}

As we discussed, the existence of a $Z^{\otimes 2}$-type stabilizer is necessary for achieving the optimal SHS $\sim n^{-3}$. Note that the stabilizer group of Shor codes that achieve the optimal SHS contains a chain of $Z^{\otimes 2}$-type stabilizers. Motivated by that example, we provide a necessary condition for a stabilizer code that achieves this optimal SHS. Here, we prove that if the optimal SHS is achievable by a stabilizer code, then the stabilizer group contains such a chain of linear length. 

\begin{theorem}
    Let $C_n$ be a family of $[[n,1,3]]$ stabilizer codes. Suppose that the number of $Z^{\otimes 3}$-type logical operators is $\ell=\Theta(n^3)$. Then, for all $n$, there is a subset of qubits indexed by $\{i_1,\ldots,i_{n_r}\}$ such that 
    \begin{enumerate}[wide, labelwidth=!,itemindent=!,labelindent=0pt, leftmargin=0em, label={(\arabic*)}, parsep=0pt]
        \item $\{Z_{i_1}Z_{i_2},Z_{i_2}Z_{i_3}, \ldots, Z_{i_{n_r-1}}Z_{i_{n_r}}\}$ is a subset of the stabilizers of $C_n$; 
        \item For any $1 \leq k \leq r$, $Z_{i_k}$ is not a stabilizer of $C_n$; 
        \item $n_r = \Theta(n)$.
    \end{enumerate}
    \label{thm: n3scaling}
\end{theorem}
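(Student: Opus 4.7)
The plan is to reduce to the case of codes with no single-qubit $Z$-stabilizers and then extract the chain via two back-to-back averaging arguments. The key reformulation is that the relation $i \sim j$ defined by $Z_iZ_j \in \scrS$ is a genuine equivalence relation (transitivity follows from $Z_iZ_j \cdot Z_jZ_k = Z_iZ_k$), so producing the chain in condition~(1) is equivalent to exhibiting a single $\sim$-equivalence class of size $n_r$. The theorem therefore boils down to showing that the hypothesis $\ell = \Theta(n^3)$ forces some such class to have linear size.

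I would first reduce to codes with no single-qubit $Z$-stabilizer. Let $Q$ be the set of qubits $i$ with $Z_i \in \scrS$. For any $i \in Q$ and any $j,k$, the operator $Z_iZ_jZ_k$ equals $Z_jZ_k$ times a stabilizer, and the distance-$3$ hypothesis forbids any weight-$1$ or weight-$2$ operator from being a non-trivial logical, so $Z_iZ_jZ_k$ is never a $Z^{\otimes 3}$-type logical operator. Hence every logical operator counted by $\ell$ is supported on $\{1,\ldots,n\}\setminus Q$, and the restricted code on $\tilde n := n - |Q|$ qubits has the same count $\ell$. Combining $\ell \leq \binom{\tilde n}{3}$ with $\ell = \Theta(n^3)$ forces $\tilde n = \Theta(n)$, so it suffices to prove the theorem on this restricted code, where condition~(2) becomes automatic.

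Next I would invoke \lemmaref{lemma: d_1lowerbd} to identify a qubit (relabelled as qubit $1$) of degree at least $3\ell/\tilde n = \Omega(\tilde n^2)$ and build an auxiliary graph $H$ on $\{2,\ldots,\tilde n\}$ whose edges are the pairs $\{j,k\}$ such that $Z_1Z_jZ_k$ is a $Z^{\otimes 3}$-type logical operator, so that $|E(H)| = \Omega(\tilde n^2)$. A second averaging step yields a vertex $q$ of $H$ with degree $n_r = \Omega(\tilde n) = \Omega(n)$; write its neighbours as $k_1,\ldots,k_{n_r}$, so that each $Z_1Z_qZ_{k_r}$ is a logical operator.

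The punchline is that for every $r\neq s$ the operators $Z_1Z_qZ_{k_r}$ and $Z_1Z_qZ_{k_s}$ commute and are both non-trivial logicals, so by the reasoning already used in the proof of \lemmaref{lemma:QFIlogicalrelation} they must be representatives of the same logical class; their product $(Z_1Z_qZ_{k_r})(Z_1Z_qZ_{k_s}) = Z_{k_r}Z_{k_s}$ therefore lies in $\scrS$. Thus $\{k_1,\ldots,k_{n_r}\}$ sits inside a single $\sim$-equivalence class: the adjacent products $Z_{k_r}Z_{k_{r+1}}$ supply condition~(1), the bound $n_r = \Omega(n)$ supplies condition~(3), and the preliminary reduction supplies condition~(2). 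The main obstacle I anticipate is making the single-qubit-$Z$-stabilizer reduction airtight (both preserving $\ell$ exactly and preserving $\tilde n = \Theta(n)$), since without it the $k_r$ might themselves be single-qubit $Z$-stabilizers and violate condition~(2); once that reduction is granted, the remainder is a straightforward double application of the pigeonhole estimate underlying \lemmaref{lemma: d_1lowerbd}.
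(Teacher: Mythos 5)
Your proof is correct, and it is genuinely simpler than the paper's argument. Both proofs start the same way: use \lemmaref{lemma: d_1lowerbd} to find a qubit (say qubit~$1$) of degree $\Theta(n^2)$, and form the auxiliary graph $H$ on the remaining qubits whose edges mark the pairs $\{j,k\}$ with $Z_1Z_jZ_k$ logical. From there the paper takes a more circuitous route: it invokes a separate graph-theoretic lemma (Lemma~S1) to extract a connected component of $H$ with $\Theta(n)$ vertices and $\Theta(n^2)$ edges, then shows many of its vertices have high order, and finally builds the chain by pairing vertices up and connecting consecutive ones by \emph{even} paths, replacing a vertex by a designated neighbour whenever the available path is odd. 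Your observation cuts through all of this: a single vertex $q$ of degree $\Omega(n)$ in $H$ already does the job, because any two of its neighbours $k_r,k_s$ are joined through $q$ by a path of length exactly $2$, so $(Z_1Z_qZ_{k_r})(Z_1Z_qZ_{k_s})=Z_{k_r}Z_{k_s}\in\scrS$. In effect you specialize the paper's even-path argument to paths of length two, which eliminates the connected-component lemma and the replacement bookkeeping entirely.

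One small remark: the preliminary reduction you flag as the ``main obstacle'' is correct (and matches what the paper does in the proof of \thmref{thm: LDPC}), but it is not actually needed here. Since $Z_1Z_qZ_{k_r}$ is a logical operator and the code has distance~$3$, none of $Z_1$, $Z_q$, $Z_{k_r}$ can be a stabilizer --- otherwise $Z_1Z_q$, $Z_1Z_{k_r}$, or $Z_qZ_{k_r}$ would be a weight-$2$ logical, contradicting $d=3$. This is exactly how the paper disposes of condition~(2) at the end of its proof, without any reduction. So you may drop the reduction and run the two averaging steps directly on $C_n$, letting condition~(2) fall out for free. With or without the reduction, your argument is sound and establishes all three conditions.
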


\begin{proof} 
    Consider $C_n$. Let $\mathcal{A}=\{A_1, A_2,\ldots, A_\ell\}$ be the set of $Z^{\otimes 3}$-type logical operators. Let $d_i$ denote the degree of the $i$-th qubit. Without loss of generality, we make the same three assumptions we made in \thmref{thm: ZZ existence}: (1) $d_1=\max\{d_i\}_{i=1}^n$, (2) The set of operators with support on the first qubit is $\mathcal{A}'=\{A_i\}_{i=1}^{d_1}$, and (3) The first $n'$ qubits is the minimum set of qubits that contains the supports of operators in $\mathcal{A}'$. 

    Define a graph $\frakG$ with $(n'-1)$ vertices $\{v_2,v_3,\ldots,v_{n'}\}$ corresponding to the qubits indexed by $2,3,\ldots,n'$, where an edge connects $v_i$ and $v_j$ if $Z_1Z_iZ_j\in \mathcal{A}'$. In total, there are $d_1=|\mathcal{A}'|$ edges in the graph. Since $\ell=\Theta(n^3)$, \lemmaref{lemma: d_1lowerbd} implies that $d_1=\Omega(n^2)$. However, $\frakG$ has $(n'-1)$ vertices, and so it has at most $d_1=O(n'^2)=O(n^2)$ edges. Thus, $d_1=\Theta(n^2)$. Therefore, we can assume $\frakG$ has $(n'-1) \geq bn$ vertices with $d_1 \geq an^2$ edges for some $a,b > 0$ if we choose $n$ to be sufficiently large. 

    Note that the product of an even number of commuting logical operators ($Z^{\otimes 3}$-type operators here to be specific) in $\mathcal{A}$ is a stabilizer. Therefore, if there is an even path (a path consisting of an even number of edges) between $v_i$ and $v_j$ in $\frakG$, then $Z_iZ_j$ is a stabilizer. We will show that we can always identify a subset of vertices of size $\Theta(n)$ such that there exists an even path between any pair of adjacent vertices.

    By \lemmaref{lemma: graphlemma}, $\frakG$ has a connected component $\frakG'$ with $n_1'=\Theta(n)$ vertices and $d_1'=\Theta(n^2)$ edges. Let the number of edges in $\frakG'$ be $d_1'\geq (c/2)n_1'^2$ for some constant $c\in (0,1)$. Suppose $\frakG'$ has $\geq c_1n_1'$ vertices whose orders are $\leq c_2n_1'$, where $c_1,c_2\in (0,1)$. Then, the sum of the orders of the vertices is at most $c_1n_1'\cdot c_2n_1'+(1-c_1)n_1'\cdot n_1'$. The sum of the orders of the vertices of a graph is twice the total number of edges. Therefore, $2\cdot (c/2)n_1'^2 \leq c_1c_2n_1'^2+(1-c_1)n_1'^2$, and so 
    \begin{align}
        c\leq 1- c_1 + c_1c_2.
    \end{align}
    Note that if we choose $c_1=1-\frac{c}{2}$ and $c_2=\frac{c}{3}$, we have $c\leq 1- c_1+c_1c_2=1-\left(1-\frac{c}{2}\right)+\left(1-\frac{c}{2}\right)\frac{c}{3}<\frac{c}{2}+\frac{c}{3}<c$, which is a contradiction. Therefore, $\frakG'$ has more than $(1-c_1)n_1'=\frac{c}{2}n_1'$ vertices with orders greater than $\frac{c}{3}n_1'$.

    Reindex the vertices in descending order of their vertex orders, that is $v_2$ has the highest order, $v_3$ has the second-highest order and so on. Let $T$ be the set of the first $\lfloor\frac{c}{6}n_1'\rfloor$ vertices. Each vertex in $T$ has an order $>\frac{c}{3}n_1'$, which, in turn, is greater than $2|T|$. Therefore, for each vertex $v_i\in T$, we can assign a unique $\tilde{v}_i\not \in T$ that is connected to $v_i$ by a single edge such that $\tilde v_i \neq \tilde v_j$ for $i \neq j$. Now, starting from $i=2$, suppose that there is no even path between $v_i$ and $v_{i+1}$. Hence, all the paths from $v_i$ to $v_{i+1}$ consist of an odd number of edges. Then, by construction, there is an even path between $v_i$ and $\tilde{v}_{i+1}$, i.e., the path from $v_i$ to $v_{i+1}$ concatenated with the single edge from $v_{i+1}$ to $\tilde{v}_{i+1}$. We replace $v_{i+1}$ by $\tilde{v}_{i+1}$. Otherwise when there is already an even path between $v_i$ and $v_{i+1}$, we do no replacement. Similarly, we apply the same replacement rules for $i = 3,4,\cdots, \lfloor\frac{c}{6}n_1'\rfloor + 1$. After all the appropriate replacements, all the adjacent vertices in $T$ are connected by even paths. Therefore, each pair of adjacent qubits support a $Z^{\otimes 2}$-type stabilizer. Furthermore, each such stabilizer is independent of the others. Hence, we conclude that, by reversing the reindexings, there exists a set of qubits labeled by $\{i_1,i_2,\ldots, i_{n_r}\}$ of size $n_r = \lfloor\frac{c}{6}n_1'\rfloor = \Theta(n)$ such that $\{Z_{i_1}Z_{i_2},Z_{i_2}Z_{i_3}, \ldots, Z_{i_{n_r-1}}Z_{i_{n_r}}\}$ is a subset of stabilizers. 

    From the discussion above, for any $1 \leq k \leq r$, there exists some qubit $j_k$ such that $Z_1 Z_{i_k} Z_{j_k}$ is a logical operator in $C_n$ because $Z_{i_k}$ must belong to some connected component of $\frakG$. It implies $Z_{i_k}$ is not a stabilizer of $C_n$ because otherwise $Z_1 Z_{j_k} = Z_1 Z_{i_k} Z_{j_k} Z_{i_k}$ is a logical operator of $C_n$, contradicting with the fact that the code distance is $3$. 
\end{proof}

Note that having $\{Z_{i_1}Z_{i_2},\ldots,Z_{i_{n_r-1}}Z_{i_{n_r}}\}$ as a part of the stabilizers of $C$ guarantees the codewords on qubits $\{i_1,\ldots,i_{n_r}\}$ must be in the span of $\{\ket{0^{\otimes n_r}},\ket{1^{\otimes n_r}}\}$. Furthermore, as $\{Z_{i_k}\}_{1 \leq k \leq r}$ are not stabilizers, for any choice of stabilizer generators there must be one of them that acts non-trivially as $X$ or $Y$ on all qubits in $\{i_1,\ldots,i_{n_r}\}$. It implies the necessary existence of a weight-$\Theta(n)$ stabilizer generator when the optimal SHS is achieved. 
\thmref{thm: n3scaling} shows the repetition code substructure described above is necessary in achieving the optimal SHS. 

\section{Conclusion and Outlook}\label{sec: conclusion}

In conclusion, we studied how different stabilizer codes perform when estimating the parameter $\omega$ encoded in the 3-local Hamiltonian in \eqref{eq:hamt} under single-qubit noise. We also derived what properties stabilizer codes must satisfy in order to asymptotically achieve the SHS or the optimal SHS. Our work suggests more investigation towards the direction of exploiting the stabilizer codes for many-body quantum metrology. 

The crux of the solution of our problem was the relation between the QFI and the number of logical $Z^{\otimes 3}$-type operators of the stabilizer code, which connects quantum metrology to the properties of stabilizer codes. Based on it, some of our results can be straightforwardly generalized to probing higher-order interactions under stronger quantum noise. For example, the $[[n,1,3]]$ thin surface code, QRM code and Shor code discussed in \secref{sec:achieving} all have natural generalizations to $[[n,1,d>3]]$ codes, that may be analogously applied in estimating higher-order interactions.

Many interesting questions remain open. For example, our results assumed the achievement of the HL, based on which we analyzed the achievability of the SHS. It would be interesting to consider whether better scalings with respect to the number of qubits will be available if we abandon the requirement of the HL. Another question is to ask how QEC protocols behave in many-body Hamiltonian estimation when taking fault-tolerance into consideration (see e.g., \cite{kapourniotis2019fault}). Our current analysis was based on fast and precise QEC, and including measurement noise or circuit-level noise is left for future work. Our approach only considers the estimation problem for a single parameter; however we anticipate the study of QEC-assisted multi-parameter estimation as a future direction.

\begin{acknowledgements}
S.Z. thanks Victor V. Albert and Michael Vasmer for helpful discussions. S.A. and S.Z. acknowledge funding provided by Perimeter Institute for Theoretical Physics, a research institute supported in part by the Government of Canada through the Department of Innovation, Science and Economic Development Canada and by the Province of Ontario through the Ministry of Colleges and Universities. 
\end{acknowledgements}

\bibliographystyle{unsrt}
\bibliography{reference.bib}

\appendix
\onecolumn 

\setcounter{theorem}{0}
\setcounter{proposition}{0}
\setcounter{lemma}{0}
\setcounter{figure}{0}
\renewcommand{\thefigure}{S\arabic{figure}}
\renewcommand{\thelemma}{S\arabic{lemma}}
\renewcommand{\thetheorem}{S\arabic{theorem}}
\renewcommand{\thecorollary}{S\arabic{corollary}}
\renewcommand{\theproposition}{S\arabic{proposition}}
\renewcommand{\theHfigure}{Supplement.\arabic{figure}}
\renewcommand{\theHlemma}{Supplement.\arabic{lemma}}
\renewcommand{\theHtheorem}{Supplement.\arabic{theorem}}
\renewcommand{\theHcorollary}{Supplement.\arabic{corollary}}

\section{Optimal QFI Coefficient}
\label{app:opt-qfi}

Here we compute the optimal QFI coefficient from \eqref{eq:opt-coeff} 
that provides the optimal coefficient of the HL among all possible QEC protocols. 
First, note that 
\begin{align}
    \frac{(\Delta G_\eff)_\opt}{2} &= \min_{\text{2-local } S} \bigg\|\sum_{i<j<k} Z_iZ_jZ_k - S\bigg\|\\
    &= \min_{\alpha_{ij},\beta_i,\gamma \in \bR} \bigg\|\sum_{i<j<k} Z_iZ_jZ_k - \sum_{i<j} \alpha_{ij} Z_iZ_j - \sum_{i} \beta_i Z_i - \gamma I \bigg\|, 
\end{align}
because any off-diagonal term (in the computational basis) will only increase the operator norm. Next, let $A(\alpha,\beta,\gamma) := \sum_{i<j<k} Z_iZ_jZ_k - \sum_{i<j} \alpha_{ij} Z_iZ_j - \sum_{i} \beta_i Z_i - \gamma I$, where we use $\alpha$ (or $\beta$) to describe the column vector whose elements are $\alpha_{ij}$ (or $\beta_i$). We have $X^{\otimes n} A(\alpha,\beta,\gamma) X^{\otimes n} = - A(-\alpha,\beta,-\gamma) $. Thus, $\norm{2 A(0,\beta,0)} = \norm{A(\alpha,\beta,\gamma) + A(-\alpha,\beta,-\gamma)} \leq \norm{A(\alpha,\beta,\gamma)} + \norm{A(-\alpha,\beta,-\gamma)} = 2 \norm{A(\alpha,\beta,\gamma)}$, where we use the triangle inequality and the unitary invariance of operator norm. It implies that
\begin{equation}
   \frac{(\Delta G_\eff)_\opt}{2} = \min_{\beta \in \bR^n} \norm{A(0,\beta,0)} . 
\end{equation}
Furthermore, let $\mP$ represents the set of all permutations (of qubits) and $P$ its element, we have $\norm{A(0,\avg(\beta),0)} \leq \frac{1}{\abs{\mP}} \sum_{P\in\mP} \norm{A(0,P\beta,0)} = \norm{A(0,\beta,0)}$, where $\avg(\beta) = (\bar\beta,\ldots,\bar\beta)$, $\bar\beta = \frac{1}{n}\sum_i \beta_i$, $P\beta = (\beta_{P(1)},\beta_{P(2)},\ldots,\beta_{P(n)})$ and we use $\norm{A(0,P\beta,0)} = \norm{A(0,\beta,0)}$ for any permutation $P$. It implies 
\begin{align}
   \frac{(\Delta G_\eff)_\opt}{2} = \min_{\bar\beta \in \bR} \norm{A(0,\avg(\beta),0)}
   = \min_{\bar\beta \in \bR} \bigg\|\sum_{i<j<k} Z_iZ_jZ_k - \bar\beta \sum_{i} Z_i \bigg\|. 
\end{align}
For any computational state $\ket{\phi(k)}$ that contains $k$ $\ket{1}$s and $(n-k)$ $\ket{0}$s, we have $A(0,\avg(\beta),0) \ket{\phi(k)} = \mu_k\ket{\phi(k)}$, where $\mu_k = \big( - \binom{k}{3} + \binom{k}{2}\binom{n-k}{1} - \binom{k}{1}\binom{n-k}{2} + \binom{n-k}{3} - \bar\beta (n-2k) \big)$. So,
\begin{equation}
    \mu_{k+1}-\mu_k =  -4k^2 + 4(n-1)k + ( 2\bar\beta - (n-2)(n-1)). 
\end{equation}
When $2\bar\beta  \leq - (n-1)$, the value of $\mu_k$ increases as $k$ increases, and the maximum and minimum eigenvalues are taken at $k=0$ and $k=n$. When $2\bar\beta  > - (n-1)$, there are four  extreme points of $\mu_k$ taken at $k=0,n$ and locations near $\frac{(n-1) \pm \sqrt{n-1+2\bar\beta}}{2}$. Taking both situations and all extreme points into consideration, it is clear that the optimal $\bar\beta$ satisfies $2\bar\beta  > - (n-1)$ and is taken at a point where the absolute values of the four extreme points are the same. As it is difficult to calculate the exact value of the optimal coefficient, we first conveniently assume $n-1$ is divisible by $4$, and take $\bar\beta = (n-1)^2/8 - (n-1)/2$. In this case, the extreme points are taken at $k=0,\frac{n-1}{4},\frac{3(n-1)}{4},n$, and the corresponding eigenvalues of $A(0,\avg(\beta),0)$ are $\mu_k = \frac{(n+7)n(n-1)}{24},-\frac{(n+1)(n-1)(n-3)}{24},\frac{(n+1)(n-1)(n-3)}{24}$, $-\frac{(n+7)n(n-1)}{24}$. It implies 
\begin{equation}
\frac{(n+1)(n-1)(n-3)}{24} \leq \frac{(\Delta G_\eff)_\opt}{2} 
= \min_{\bar\beta \in \bR} \bigg\|\sum_{i<j<k} Z_iZ_jZ_k - \bar\beta \sum_{i} Z_i \bigg\| 
\leq \frac{(n+7)n(n-1)}{24}.
\end{equation} 
For general $n$, we have 
\begin{align}
{(\Delta G_\eff)_\opt} = \frac{n^3}{12} + O(n^2). 
\end{align}

\section{Proofs of Properties of Shortened RM Codes}\label{app:A}

The classical RM codes have numerous interesting properties. In \secref{subsec:QRM}, we outlined three key properties of the shortened RM code that are relevant to our discussion of QRM codes. Property~(1), i.e., $\overline{RM}(1,m)\subsetneq \overline{RM}(m-2,m)$, is a simple observation since the rows of $\overline{\sfG}(1,m)$ are the same as the first $m$ rows of $\overline{\sfG}(m-2,m)$. Here, we prove Property~(2) and (3).

\begin{manualtheorem}{2}
The vector $\bm{1}_{2^m-1}$ and those corresponding to the rows of $\overline{\sfG}(m-2,m)$ form a basis for $\overline{RM}(1,m)^\perp$.    
\end{manualtheorem}

\begin{proof}
We know that $RM(1,m)^\perp = RM(m-2,m)$~\cite{macwilliams1977theory}. For the shortened code, we first delete the all-1 row and then delete the all-0 column. Hence, it immediately follows that $\overline{RM}(m-2,m) \subseteq \overline{RM}(1,m)^\perp$. Recall that $\dim \overline{RM}(1,m)=m=\binom{m}{m-1}$ and $\dim\overline{RM}(m-2,m) =\binom{m}{1}+\binom{m}{2}+\cdots+\binom{m}{m-2}$. Then, $\dim \overline{RM}(1,m)+\dim \overline{RM}(m-2,m)=2^m-2$, which is 1 less than the dimension of the column space. Hence, $\overline{RM}(1,m)^\perp$ has one extra basis vector that is not in the row space of $\overline{RM}(m-2,m)$. We claim that $\bm{1}_{2^m-1}$ is the additional basis vector for $\overline{RM}(1,m)^\perp$. Since the rows of $\overline{\sfG}(1,m)$ are of even weight, it is clear that $\bm{1}_{2^m-1}$ is orthogonal to each row vector, and so $\bm{1}_{2^m-1}\in \overline{RM}(1,m)^\perp$. It remains to show that $\bm{1}_{2^m-1}\not \in \overline{RM}(m-2,m)$.

For a contradiction, assume that $\bm{1}_{2^m-1}\in \overline{RM}(m-2,m)$. Then there are rows $\bm{v}_{i_1}^T, \bm{v}_{i_2}^T, \ldots \bm{v}_{i_p}^T$ of $\overline{\sfG}(m-2,m)$ such that $\bm{v}_{i_1}^T+\bm{v}_{i_2}^T+\cdots+ \bm{v}_{i_p}^T = \bm{1}_{2^m-1}^T$. It follows that $(0|\bm{v}_{i_1}^T)^T+(0|\bm{v}_{i_2}^T)^T +\cdots+ (0|\bm{v}_{i_p}^T)^T = (0|\bm{1_{2^m-1}}^T)^T\in RM(m-2,m)$. Adding $\bm{1}_{2^m}\in RM(m-2,m)$ to this equation, we obtain $\bm{1}_{2^m}+(0|\bm{v}_{i_1}^T)^T+(0|\bm{v}_{i_2}^T)^T +\cdots+ (0|\bm{v}_{i_p}^T)^T = (1|\bm{0}_{2^m-1}^T)^T$, which is a weight-1 vector. However, $RM(m-2,m)$ has a minimum weight of $4$ (The minimum weight of $RM(r,m)$ is $2^{m-r}$~\cite{macwilliams1977theory}). So, we have a contradiction; hence, property~(2) is proven. 
\end{proof}

\begin{manualtheorem}{3}
The number of codewords in $\overline{RM}(1,m)^\perp$ with weight $(2^m-4)$ is $\frac{1}{6}(4^m-3\cdot 2^m+2)$.
\end{manualtheorem}

\begin{proof}
    First, we consider the weight distribution, i.e., the number of codewords of each weight, of $RM(1,m)$. From that, we will obtain the weight distribution of $\overline{RM}(1,m)$, which, in turn, will give us the weight distribution of $\overline{RM}(1,m)^\perp$.

    Let $A_k$ denote the number of codewords of weight $k$ in $RM(1,m)$. It can be shown that $A_0=1$, $A_{2^m}=1$, $A_{2^{m-1}}=2^{m+1}-2$, and $A_k=0$ for all $k\neq 0,1,2^{m-1}$~\cite{macwilliams1977theory}. Note that adding the vector $\bm{1}_{2^m}^T$ to any weight-$(2^{m-1})$ vector of the form $(0|\bm{v})^T\in RM(1,m)$ results in another weight-$(2^{m-1})$ vector of the form $(1|\bm{v}')^T$, where $\bm{v}$ and $\bm{v}'$ are $(2^m-1)$-tuple row vectors. So, there is a one-to-one correspondence (related by the addition of $\bm{1}_{2^m}$) between the weight-$2^{m-1}$ vectors with the first entry 0 and the weight-$2^{m-1}$ vectors with the first entry 1. The shortening process gets rid of the all-1 row vector from the generator matrix. Therefore, the number of weight-$2^{m-1}$ vectors in $\overline{RM}(1,m)$ is half of that in $RM(1,m)$. Denoting $\overline{A}_k$ as the number of weight-$k$ codewords in $\overline{RM}(1,m)$, we get $\overline{A}_0=1$, $\overline{A}_{2^{m-1}}=2^m-1$ and $\overline{A}_k=0$ for all $k\neq 0, 2^{m-1}$. 

    The \textit{enumerator polynomial} $W(C;x,y)$ succinctly represents the weight distribution of a classical code $C$, with the coefficient of $x^ly^{n-l}$ representing the number of codewords of weight $l$. Based on the values of $\overline{A}_k$, the enumerator polynomial of $\overline{RM}(1,m)$ is 
    \begin{align}
        W(\overline{RM}(1,m);x,y)=y^{2^m-1}+(2^m-1)x^{2^{m-1}}y^{2^{m-1}-1}.
        \label{eqn: enumshortRM(1,m)}
    \end{align}
    One can directly obtain the weight distribution of the dual code $C^\perp$ from the enumerator polynomial of $C$ using the \textit{Macwilliams identity} 
    \begin{equation}
    W(C^\perp;x,y)=\frac{1}{|C|}W(C; y-x,x+y),     
    \end{equation} 
    where $|C|$ is the number of codewords in $C$~\cite{macwilliams1977theory}. Since $|\overline{RM}(1,m)|=2^m$, applying the Macwilliams identity to $W(\overline{RM}(1,m);x,y)$ yields
    \begin{align}
        W(\overline{RM}(1,m)^\perp;x,y) =\frac{1}{2^m}(x+y)^{2^m-1} +\frac{(2^m-1)}{2^m} (y-x)^{2^{m-1}} (x+y)^{2^{m-1}-1}.
        \label{eqn: enumshortRM(1,m)perp}
    \end{align}
    We want to find the coefficient of $x^{2^m-4}y^3$ on the right side of \eqref{eqn: enumshortRM(1,m)perp}. Through a simple combinatorial calculation, the coefficient is found to be $\frac{1}{2^m}\binom{2^m-1}{3}+\frac{2^m-1}{2^m}\big[-\binom{2^{m-1}}{3}\binom{2^{m-1}-1}{0}+ \binom{2^{m-1}}{2}\binom{2^{m-1}-1}{1}- \binom{2^{m-1}}{1}\binom{2^{m-1}-1}{2}+ \binom{2^{m-1}}{0}\binom{2^{m-1}-1}{3}\big]$, which simplifies to $\frac{1}{6}(4^m-3\cdot 2^m+2)$. So, the number of weight-$(2^m-4)$ codewords in $\overline{RM}(1,m)^\perp$ is $\frac{1}{6}(4^m-3\cdot 2^m+2)$.
\end{proof}

\section{Graph Theory Lemma}

In \secref{sec: no go}, we used several graph-theoretic tools to establish the no-go results. Specifically, in \secref{subsec: cubic scaling}, we claimed that any graph with a quadratic number of edges has a connected component with a linear number of vertices and a quadratic number of edges. We now prove this claim in the following lemma.

\begin{lemma}
    Consider a family of graphs $\{\frakG^{(n)}\}_n$ where $n$ belongs to an infinitely large set of positive integers. 
    Assume $\frakG^{(n)}$ has $\Theta(n)$ vertices and $\Theta(n^2)$ edges. Then there is a family of connected components of $\frakG^{(n)}$ with $\Theta(n)$ vertices and $\Theta(n^2)$ edges.
    \label{lemma: graphlemma}
\end{lemma}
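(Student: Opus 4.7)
The plan is to exhibit, for each $n$, one connected component $C_n$ of $\frakG^{(n)}$ that has both $\Theta(n)$ vertices and $\Theta(n^2)$ edges, via a pigeonhole argument on the component sizes. First I fix constants $a_1, a_2, b_1, b_2 > 0$ such that the vertex count $V_n$ and the edge count $E_n$ of $\frakG^{(n)}$ obey $a_1 n \leq V_n \leq a_2 n$ and $b_1 n^2 \leq E_n \leq b_2 n^2$ for all $n$ in the index set. These constants exist by the assumption that the vertex and edge counts are $\Theta(n)$ and $\Theta(n^2)$, respectively.

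Next, for a constant $\epsilon > 0$ to be chosen in a moment, I classify each connected component with $v_i$ vertices as \emph{small} if $v_i \leq \epsilon n$ and \emph{large} otherwise. A component on $v_i$ vertices has at most $\binom{v_i}{2} \leq v_i^2/2$ edges, and when it is small this is bounded by $v_i \cdot \epsilon n / 2$. Summing over all small components, the total number of edges they contribute is at most $(\epsilon n / 2)\sum_i v_i \leq (a_2 \epsilon / 2)\, n^2$. Choosing $\epsilon := b_1/(2 a_2)$ makes this at most $b_1 n^2 / 4$, so the large components together carry at least $E_n - b_1 n^2 / 4 \geq 3 b_1 n^2 / 4 = \Omega(n^2)$ edges. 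Meanwhile, the number of large components is at most $V_n/(\epsilon n) \leq a_2/\epsilon$, which is a constant independent of $n$.

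By pigeonhole, at least one large component $C_n$ carries $\Omega(n^2)$ edges, and at most $E_n = O(n^2)$; because it is large, its vertex count lies between $\epsilon n = \Omega(n)$ and $V_n = O(n)$. Hence $\{C_n\}$ has $\Theta(n)$ vertices and $\Theta(n^2)$ edges, as required. I do not expect a significant obstacle here: the argument is essentially pigeonhole applied to the naive bound $e_i \leq \binom{v_i}{2}$, and the only delicate point is choosing $\epsilon$ small enough so that small components cannot absorb more than a $O(\epsilon)$ fraction of the total edge budget, which leaves an $\Omega(n^2)$ surplus to be distributed among only a constant number of large components.
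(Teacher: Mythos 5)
Your proof is correct, and it takes a genuinely different route from the paper's. The paper fixes $\bar e$ as the maximum edge count over components, constructs an auxiliary graph $\frakF$ by partitioning the vertex set into cliques of size $\bar v \approx \sqrt{2\bar e}$ (so that each clique holds at least $\bar e$ edges), and argues that $\frakF$ has at least as many edges as $\frakG^{(n)}$; this yields an inequality in $\bar e$ that forces $\bar e = \Omega(n^2)$, with the $O(n^2)$ bound and the $\Theta(n)$ vertex count then following separately. Your argument instead picks a threshold $\epsilon n$ directly in terms of the $\Theta$-constants $a_2, b_1$, uses the crude bound $e_i \le v_i^2/2 \le v_i\,\epsilon n/2$ to show that small components collectively absorb at most a quarter of the edge budget, and then observes that the at most $a_2/\epsilon$ large components must together carry $\Omega(n^2)$ edges, so one of them does by pigeonhole. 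Both proofs prove the same statement; yours sidesteps the extremal rearrangement construction and makes the dependence on the implied constants completely explicit, which is arguably easier to audit. One implicit step you may want to state: the fact that large components carry $\ge \tfrac{3}{4}b_1 n^2 > 0$ edges also certifies that at least one large component exists, which is the precondition for the pigeonhole step.
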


\begin{proof}
    For a sufficiently large $n$, let $\frakG = \frakG^{(n)}$ be a graph with $\geq an^2$ edges and $\geq bn$ vertices for some $a,b > 0$. Let the connected components of $\frakG$ be $\frakG_1$, $\frakG_2$, $\ldots, \frakG_k$, and for all $i$, the number of vertices and edges in $\frakG_i$ be $\bar v_i$ and $\bar e_i$, respectively. We index the connected components in descending order of the number of vertices. Let $\bar e=\max\{ \bar e_i\}_{i=1}^k$. Suppose that the first $l$ connected components have at least $\bar v=\lceil \sqrt{2\bar e}+1 \rceil$ vertices, i.e., enough vertices to make a complete graph with $\bar e$ edges. 

    Consider another graph $\frakF$ with the same number of vertices as $\frakG$ constructed in the following steps. Note that we will first partition vertices into different sets in steps (1) and (2) and then add edges within different sets in step (3). 
    \begin{enumerate}[wide, labelwidth=!,itemindent=!,labelindent=0pt, leftmargin=0em, label={(\arabic*)}, parsep=0pt]
        \item Let $V_1,V_2,\ldots,V_l$ be subsets of vertices from $\frakG_1,\ldots,\frakG_l$, where for each $\frakG_i$ ($1\leq i \leq l$), we only select $\bar v$ vertices to place in $V_i$ and place the rest of all vertices of $\mathfrak{G}$ in the set $V^R$. 
        \item Partition the set of vertices $V^R$ into subsets $V_{l+1},$ $V_{l+2}, \ldots V_{k'}, V^{R'}$ such that each $V_i$ ($1\leq i \leq k'$) has $\bar v$ vertices and the remainder set $V^{R'}$ has $\bar w < \bar v$ vertices. 
        \item Add edges between all vertices within each subsets of vertices $V_1,V_2, \ldots, V_{k'}, V^{R'}$ to make them complete subgraphs $\frakF_1,\frakF_2, \ldots, \frakF_{k'}, \frakR$, which compose graph $\frakF$. 
    \end{enumerate}
Connected components $\frakF_1, \frakF_2, \ldots, \frakF_{k'}$ of $\frakF$ have at least $\bar e$ edges each, and $\frakR$ has $\bar w(\bar w-1)/2$ edges. So, the number of edges in $\frakF$ is at least $\left(\frac{b n - \bar w}{\bar v} \right)\bar e+\frac{\bar w(\bar w-1)}{2}$. From the construction, $\frakF$ has at least as many edges as $\frakG$ does. Therefore,
    \begin{gather}
        \left(\frac{bn-\bar w}{\sqrt{2\bar e}}\right)\bar e+\frac{\bar w(\bar w-1)}{2}\geq a n^2, \nonumber\\ \Rightarrow~~~ bn\sqrt{\bar e/2} + \sqrt{\bar e/2} + 1 \geq an^2,
    \end{gather}
    where we set $\bar w = \sqrt{2\bar e} + 2$ in the second step which increases the left-hand side so that the inequality still holds. Solving for $\bar e$, we get $\bar e=\Omega(n^2)$. A graph with $\Omega(n^2)$ edges must have $\Omega(n)$ vertices. However, the number of vertices in any connected component of $\frakG$ is $O(n)$, which implies that $\bar e=O(n^2)$. Therefore, $\bar e=\Theta(n^2)$ and the number of vertices in a component with $\bar e$ edges is $\Theta(n)$. Hence, $\frakG$ has a connected component with $\Theta(n)$ vertices and $\Theta(n^2)$ edges.
\end{proof}

\end{document}